\title[]{Sharp spectral transition for embedded eigenvalues of perturbed periodic Dirac operators}
\author{Kang Lyu}
\address{ School of Mathematics and Statistics, Nanjing University of Science and Technology, Nanjing, 210094, Jiangsu, People’s Republic of China}
\email{lvkang201905@outlook.com}
\urladdr{https://kanglyu.github.io/index.html}
\author{Chuanfu Yang}
\address{ School of Mathematics and Statistics, Nanjing University of Science and Technology, Nanjing, 210094, Jiangsu, People’s Republic of China}
\email{chuanfuyang@njust.edu.cn}
\keywords{Dirac operator,  essential spectrum, embedded eigenvalues.}
\thanks{{\em 2020 Mathematics Subject Classification.} Primary: 34L15. Secondary: 34A30}
\newcommand{\abs}[1]{\left\lvert #1 \right\rvert}
\theoremstyle{plain}
\newtheorem{theorem}{Theorem}[section]
\newtheorem{corollary}[theorem]{Corollary}
\newtheorem{lemma}[theorem]{Lemma}
\newtheorem{proposition}[theorem]{Proposition}
\newcommand{\Z}{\mathbb{Z}}
\theoremstyle{definition}
\begin{document}
	{\begin{abstract} We consider the Dirac equation on $L^2(\mathbb{R})\oplus L^2(\mathbb{R})$
			\begin{align}
				Ly=
				\begin{pmatrix}
					0&-1\\
					1&0
				\end{pmatrix}
				\begin{pmatrix}
					y_1\\
					y_2
				\end{pmatrix}'+
				\begin{pmatrix}
					p&q\\
					q&-p
				\end{pmatrix}\begin{pmatrix}
					y_1\\
					y_2
				\end{pmatrix}+
				V\begin{pmatrix}
					y_1\\
					y_2
				\end{pmatrix}=\lambda y,\nonumber
			\end{align} 
			where $y=y(x,\lambda)=\tbinom{y_1(x,\lambda)}{y_2(x,\lambda)}$,  $p$ and $q$ are real $1$-periodic, and 
			\begin{align}
				V=\begin{pmatrix}
					V(x)&0\\
					0&-V(x)
				\end{pmatrix}\nonumber
			\end{align}
			is the perturbation which satisfies $V(x)=o(1)$ as $\abs{x}\to\infty.$
			Under such perturbation, the essential spectrum of $L$ coincides with that there is no perturbation.
		We prove that if $V(x)=\frac{o(1)}{1+\abs{x}}$ as $x\to\infty$ or $x\to-\infty$, then there is no embedded eigenvalues (eigenvalues appear in the essential spectrum). For any given finite set inside of  the essential spectrum which satisfies the non-resonance assumption, we construct smooth potentials with $V(x)=\frac{O(1)}{1+\abs{x}}$ as $\abs{x}\to\infty$ so that the set becomes embedded eigenvalues. For any given countable set inside of the essential spectrum which satisfies the non-resonance assumption, we construct smooth potentials with $V(x)<\frac{\abs{h(x)}}{1+\abs{x}}$ as $\abs{x}\to\infty$ so that the set becomes embedded eigenvalues, where $h(x)$ is any given function with $\lim_{x\to\pm\infty}\abs{h(x)}=\infty.$
	\end{abstract}}
	
	%%% ----------------------------------------------------------------------
	\maketitle
	%%% ----------------------------------------------------------------------
	%\tableofcontents
	\section{Introduction and main results}
	Consider the Dirac equation on $L^2(\mathbb{R})\oplus L^2(\mathbb{R})$
\begin{align}\label{perturbeddiracequation}
	Ly=
\begin{pmatrix}
	0&-1\\
	1&0
\end{pmatrix}
\begin{pmatrix}
	y_1\\
	y_2
\end{pmatrix}'+
\begin{pmatrix}
	p&q\\
	q&-p
\end{pmatrix}\begin{pmatrix}
y_1\\
y_2
\end{pmatrix}+
V\begin{pmatrix}
	y_1\\
	y_2
\end{pmatrix}=\lambda y,
\end{align} 
where $y=y(x,\lambda)=\tbinom{y_1(x,\lambda)}{y_2(x,\lambda)}$,  $p$ and $q$ are real $1$-periodic, and 
\begin{align}
	V=\begin{pmatrix}
		V(x)&0\\
		0&-V(x)
	\end{pmatrix}\nonumber
\end{align}
is the perturbation. In the following, we always assume that 
\begin{align}\label{assumptiondiracpotential}
	V(\cdot)\in L^\infty,\ V(x)=o(1), \abs{x}\to\infty.
\end{align}
When $V\equiv 0$, we have an unperturbed $1$-periodic Dirac equation
\begin{align}\label{periodicdiracequation}
	L_0g=
	\begin{pmatrix}
		0&-1\\
		1&0
	\end{pmatrix}
	\begin{pmatrix}
		g_1\\
		g_2
	\end{pmatrix}'+
	\begin{pmatrix}
		p&q\\
		q&-p
	\end{pmatrix}\begin{pmatrix}
	g_1\\
		g_2
	\end{pmatrix}
	=\lambda g,
\end{align} 
where $g=g(x,\lambda)=\tbinom{g_1(x,\lambda)}{g_2(x,\lambda)}$. 

The spectrum of $L_0$ consists of a class of intervals:
\begin{align}
	\sigma(L_0)=\sigma_{ess}(L_0)=\cup_{j=-\infty}^\infty [a_j,b_j],\nonumber
\end{align}
where $a_j,b_j$ are eigenvalues of $L_0g=\lambda g,\ x\in (0,1)$ with periodic or anti-periodic boundary conditions. Under the assumption \eqref{assumptiondiracpotential}, by Weyl's criterion, one has
\begin{align}
	\sigma_{ess}(L)=\sigma_{ess}(L_0)=\cup_{j=-\infty}^\infty [a_j,b_j].\nonumber
\end{align}
For any $\lambda\in\sigma_{ess}(L_0)$, let $g(x,\lambda)$ be the corresponding Floquet solution with 
\begin{align}\label{Floquetexponet}
	g(x+1,\lambda)=e^{ik(\lambda)}g(x,\lambda),
\end{align}
where $k(\lambda)$ is the quasimomentum. By Floquet theory, $k(\lambda)$ monotonically decreases from $\pi$ to $0$ or monotonically increases from $0$ to $\pi$ on each interval $[a_j,b_j]$ ( for example, see \cite{brown2013periodicbook}).

The problem of embedded eigenvalues into the essential spectrum or absolutely continuous spectrum has been widely studied on Schr\"odinger operators. For example, for the Schr\"odinger operator on $L^2(0,\infty)$, given by
\begin{align}
	Hu=-u''+Vu.\nonumber
\end{align}
Kato's result \cite{kato1959} shows that there is no embedded eigenvalues larger than $a^2$, where $a=\limsup_{x\to\infty}\abs{xV(x)}$.  Classical Wigner-von Neumann type potential 
\begin{align}
	V(x)=\frac{a}{1+x}\sin \left(kx+\theta\right)\nonumber
\end{align}
provides with a decaying oscillatory perturbation which creates a single embedded eigenvalue \cite{von1929}. Wigner-von Neumann type potentials play an important role in problems of spectral analysis (\cite{Lukicjspwignervonneumann,Lukictransperiodic,Lukiccmpdecayingoscillatory,judgejacobineumann,janasdiscreteschrodinger,simonovzerosspectraldensity,simonovspectralanalysis}), even the sharp bound for single embedded eigenvalue is obtained by the sign function of Wigner-von Neumann type potential (\cite{atkinson1978bounds,liudiscrete}).

For many embedded eigenvalues, Naboko \cite{Na86} constructed  potentials with countably many embedded (rationally independent) eigenvalues, and Simon \cite{simondense} improved the result with no restrictions on eigenvalues. In recent years, this direction has more fertile results \cite{kiselevperiodicmethod,liuirreducibilityfermivariety,krugerembeddedeigenvalues,liustarkmathN,liuasymptotic,kiselevsingularJAMS,vishwamdirac,judgegeometricapproach,judgediscretelevinsontechnique,JL19,Liuresonant,LO17,L19stark,liudiscrete}.
For example, in \cite{judgegeometricapproach,judgediscretelevinsontechnique}, Judge, Naboko, and Wood introduced perturbations that create embedded eigenvalues for Jacobi operators. In \cite{JL19},
Jitomirskaya and Liu introduced a novel idea to construct embedded eigenvalues for Laplacian on manifolds and this approach has been developed for various operators in \cite{Liuresonant,LO17,L19stark}. 

Consider the corresponding periodic problem of Schr\"odinger operator on $L^2(\mathbb{R})$, given by
\begin{align}
	Hu=H_0u+V_0u=-u''+q_0u+V_0u=\lambda u,
\end{align}
where $q_0$ is a real $1$-periodic function and $V_0$ is the perturbation. If there is no perturbation. Namely, $V_0=0$, then the spectrum of $H=H_0$ is a union of intervals
\begin{align}
	\sigma_{ess}(H_0)=\sigma(H_0)=\cup_{j=1}^\infty[c_j,d_j],\nonumber
\end{align}
where $c_j,d_j$ are eigenvalues of $H_0u=\lambda u,\ x\in (0,1)$ with periodic or anti-periodic boundary conditions.  The authors in \cite{LO17} constructed $V_0\in C^\infty$ to embed finitely (countably) many (non-resonant) eigenvalues into the essential spectrum. We provide an overview of their construction process. In contrast to the differential equations of the (modified) Pr\"ufer variables in \cite{atkinson1978bounds,L19stark,liudiscrete}, the authors need to handle a new equation coupled with a periodic term (one may obtain the speed of the decrease of eigensolutions by analyzing this differential equation, which becomes very complicated when there is a periodic term). Dealing with the Fourier expansion of the periodic term, the authors obtained a robust estimate for eigensolutions. After that, they can construct potentials so that the eigensolutions of different eigenvalues decrease one by one (when one of the eigensolutions decreases, others do not undergo significant increments), these are ensured by \eqref{RpmxC} and \eqref{Rj1.5R0}.
This exploration motivates our interest in investigating periodic Dirac operators. Once we can construct perturbations with \eqref{RpmxC} and \eqref{Rj1.5R0}, following the construction in \cite{LO17,JL19,L19stark}, we can handle the problem.  See \cite{korotyaevdislocationperiodicdirac,korotyaevperiodicdiracdubrovinequation,teschlperiodicbandedge} for more discussions of periodic operators.

The difficulty is that there is no proper Pr\"ufer transformation for periodic Dirac operators at hand. We mention that the classical Pr\"ufer transformation appearing in \cite{vishwamdirac} can only deal with the case that there is no periodic term. 
Different from the differential equations for (modified) Pr\"ufer variables ($R,\theta$) for periodic Schr\"odinger operators in \cite{kiselevperiodicmethod}, the differential equations for (modified) Pr\"ufer variables ($R,\theta_1,\theta_2$) for periodic Dirac operators are much more complicated. First of all, there is only one Pr\"ufer angle of periodic Schr\"odinger operators, but 
there are two Pr\"ufer angles for periodic Dirac operators and
the two Pr\"ufer angles induce one more differential equation ((c) of Lemma \ref{lemmaprufer} ) compared with the Schr\"odinger case. This causes that the differential equation for $R$ couples with two Pr\"ufer angles ((b) of Lemma \ref{lemmaprufer} ), so we need to consider one more parameter when estimating the eigensolution. Moreover, the differential equation of $R$ in \cite{LO17} is a direct product of a periodic function and $\frac{\sin\theta(x)}{x}$ ($\approx \frac{\sin kx}{x}$), so that once they obtain the estimate of the product of a periodic function and $\frac{\sin kx}{x}$, they can show Lemmas \ref{lemma-100} and \ref{lemma1.5}. However, for periodic Dirac operators, we need to estimate the integral of the product of a periodic function and $\left(\frac{g_1(x)\sin\theta_1(x)}{x}+\frac{g_2(x)\sin\theta_2(x)}{x}\right)$ at the same time. This requires a more delicate analysis.
 We mention that since the same pity in \cite{LO17} appears here, we need the non-resonance assumption for the embedded eigenvalues, too.

The main results of this paper are as follows.
\begin{theorem}\label{theorem1}
	Suppose that $V$ satisfies 
	\begin{align}
		V(x)=\frac{o(1)}{1+\abs{x}},
	\end{align}
	as $x\to\infty$ or $x\to-\infty$, then for any $\lambda\in\cup_{j=-\infty}^\infty(a_j,b_j)$, $\lambda$ is not an eigenvalue of $L$.
\end{theorem}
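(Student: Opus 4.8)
The plan is to establish a non-existence result for embedded eigenvalues by showing that any $L^2$-solution of $Ly = \lambda y$ at an interior point $\lambda \in (a_j, b_j)$ must in fact vanish, via an asymptotic analysis of the generalized Prüfer variables $(R, \theta_1, \theta_2)$ introduced in Lemma \ref{lemmaprufer}. First I would fix $\lambda$ in some open band $(a_j, b_j)$, so that $k(\lambda) \in (0,\pi)$ and the Floquet solutions $g(x,\lambda)$ of the unperturbed equation $L_0 g = \lambda g$ are genuinely oscillatory (neither a band edge nor a gap). Since $\lambda$ is interior, the two Floquet solutions $g^{\pm}(x,\lambda)$ form a bounded, non-degenerate basis of solutions of the periodic problem, and one writes a solution $y$ of the perturbed equation in this Floquet basis, arriving at the modified Prüfer system of Lemma \ref{lemmaprufer}. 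The key structural input is part (b) of that lemma: the logarithmic derivative $\frac{R'(x)}{R(x)}$ equals $V(x)$ times a bounded trigonometric expression in $\theta_1, \theta_2$ built from $g_1, g_2$ — concretely something of the form $V(x)\,F(x, \theta_1(x), \theta_2(x))$ with $F$ uniformly bounded by a constant depending only on $\sup|g_i|$.

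The main step is then the integration of this logarithmic derivative. Under the hypothesis $V(x) = o(1)/(1+|x|)$ as $x \to +\infty$ (the case $x \to -\infty$ being symmetric), we get
\begin{align}
\left| \log R(x) - \log R(x_0) \right| \le C \int_{x_0}^{x} |V(t)|\, \dd t \le C \int_{x_0}^{x} \frac{o(1)}{1+t}\, \dd t = o(\log x).\nonumber
\end{align}
Hence $R(x) = x^{o(1)}$ as $x \to +\infty$: the modulus of any solution decays (or grows) subpolynomially, and in particular slower than any negative power of $x$. On the other hand — and this is the crucial contrast — for $\lambda$ in the interior of a band the equation is in the limit-point case with purely absolutely continuous essential spectrum there, and a genuine $L^2(\mathbb{R})$ eigenfunction would have to decay faster than $x^{-1/2}$ at infinity; more precisely, I would invoke the standard fact (a subordinacy / Gilbert–Pearson-type argument, or a direct Wronskian computation against the bounded Floquet solutions) that for interior $\lambda$ no solution can be $L^2$ at $+\infty$ unless it is identically zero, once one knows $\liminf_{x\to\infty} R(x) > 0$ in the subpolynomial sense just obtained. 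The point is that $R(x)\ge x^{-\epsilon}$ for every $\epsilon>0$ and large $x$ forces $\int |y|^2 = \infty$ when compared against the oscillatory Floquet basis; equivalently, $R(x)$ cannot tend to $0$, so $y \notin L^2$ near that end.

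The main obstacle I anticipate is controlling the oscillatory cancellation well enough to get the clean bound $\int |V|\,|F|\,\dd t = o(\log x)$: the crude estimate $|F| \le C$ already suffices here because we only need the $o(1)/(1+|x|)$ rate, but one must be careful that the Prüfer angles $\theta_1, \theta_2$ are genuinely defined and the representation of part (b) is valid for \emph{all} solutions (including the putative eigenfunction), which requires checking that $R(x)$ never vanishes — this follows from uniqueness for the linear ODE, since $R^2$ is comparable to $|y_1|^2 + |y_2|^2$ up to the non-degenerate Floquet change of basis. A secondary technical point is handling the endpoints: the hypothesis is only assumed at \emph{one} end ($x\to+\infty$ or $x\to-\infty$), so it is enough to show $y\notin L^2$ at that single end, and the limit-point nature of $L$ at that end makes the one-sided conclusion legitimate. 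I would organize the write-up as: (i) reduce to interior $\lambda$ and set up the Prüfer system; (ii) derive $R(x) = x^{o(1)}$ from (b) of Lemma \ref{lemmaprufer} and the decay hypothesis; (iii) conclude $y \notin L^2$ at the relevant end, hence $y \equiv 0$.
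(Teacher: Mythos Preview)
Your proposal is correct and follows essentially the same route as the paper: integrate part (b) of Lemma~\ref{lemmaprufer} using the crude bound $|F|\le C$ together with $|V(x)|\le \varepsilon/x$ to obtain $\ln R(x)-\ln R(x_0)\ge -C\varepsilon\ln(x/x_0)$, hence $R(x)\ge x^{-C\varepsilon}$, and then conclude $y\notin L^2$ at that end via the comparability \eqref{urcomparible}. The subordinacy/Gilbert--Pearson machinery you allude to is unnecessary: the paper simply fixes $\varepsilon$ small enough that $C\varepsilon<\tfrac12$, so $R(x)\ge x^{-1/2}$ for large $x$ and $R\notin L^2$ follows immediately.
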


\begin{theorem}\label{theorem2}
	Suppose $\{\lambda_n\}_{n=1}^N\subset \cup_j(a_j,b_j)$ such that  quasimomenta $\{k(\lambda_n)\}_{n=1}^N$ are different. Suppose that for any $i,j\in\{1,2,\cdots,N\},$ $ k(\lambda_i)+k(\lambda_j)\neq \pi.$ Then there exist potentials $V\in C^\infty (\mathbb{R})$ with
	\begin{align}
		V(x)=\frac{O(1)}{1+\abs{x}}, \ \abs{x}\to\infty,\nonumber
	\end{align}
such that $\{\lambda_n\}_{n=1}^N$ are eigenvalues of $L$. 
\end{theorem}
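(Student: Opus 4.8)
The plan is to adapt the construction from \cite{LO17} for periodic Schr\"odinger operators to the Dirac setting, using the Pr\"ufer-type variables $(R,\theta_1,\theta_2)$ whose differential equations are recorded in Lemma \ref{lemmaprufer}. The target is to build a smooth potential $V$ with $V(x)=\frac{O(1)}{1+\abs{x}}$ such that for each $n$ the Floquet-type eigensolution at energy $\lambda_n$ is genuinely $L^2$ at both $\pm\infty$. The key quantitative input should be estimates of the form \eqref{RpmxC}--\eqref{Rj1.5R0}: roughly, that one can choose the phases of local Wigner--von Neumann blocks so that along $x\to+\infty$ (and separately $x\to-\infty$) the amplitude $R_n(x)$ associated to $\lambda_n$ decays like a negative power of $x$, while the amplitudes $R_m(x)$ for $m\neq n$ stay controlled (no significant growth). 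Since $k(\lambda_i)\neq k(\lambda_j)$ and $k(\lambda_i)+k(\lambda_j)\neq\pi$ for all $i,j$, the resonant Fourier modes that would couple distinct energies are absent, and the off-diagonal interactions are non-resonant and hence harmless; this is exactly where the hypotheses are used.

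First I would fix, on each spectral band containing some $\lambda_n$, the Floquet solution $g(\cdot,\lambda_n)=\binom{g_1}{g_2}$ with quasimomentum $k(\lambda_n)$, and write the solution of the perturbed equation \eqref{perturbeddiracequation} at $\lambda=\lambda_n$ in the modified Pr\"ufer form, so that the radial equation (b) of Lemma \ref{lemmaprufer} reads $(\log R_n)' = V(x)\cdot F_n(x,\theta_1,\theta_2)$ with $F_n$ a bounded combination of $\frac{g_1\sin\theta_1}{\cdots}$ and $\frac{g_2\sin\theta_2}{\cdots}$ type terms, and the angle equations (a),(c) describe how $\theta_1,\theta_2$ evolve, asymptotically like linear motion with the two quasimomenta. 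Next I would partition $[1,\infty)$ (and symmetrically $(-\infty,-1]$) into consecutive intervals $I_\ell$ growing geometrically, and on each $I_\ell$ switch on a Wigner--von Neumann block $V(x)=\frac{c_\ell}{x}\bigl(\text{trig polynomial in the relevant resonant frequencies}\bigr)$ designed to make exactly one chosen $R_n$ decrease by a definite factor while, using the non-resonance of the frequencies, the resulting drift in the other $R_m$ and in all the phases is $o(1)$ and summable. Cycling through $n=1,\dots,N$ infinitely often forces $R_n(x)\to 0$ at a polynomial rate for every $n$ simultaneously at $+\infty$, and the same scheme run on the negative half-line gives decay at $-\infty$; squaring and integrating, each eigensolution lies in $L^2(\mathbb{R})\oplus L^2(\mathbb{R})$, so each $\lambda_n$ is an eigenvalue. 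Smoothness of $V$ is arranged by smoothing the transitions between blocks (this does not affect the $\frac{O(1)}{1+|x|}$ bound), and finiteness of $N$ lets us keep the constants $c_\ell$ bounded, which is precisely why the bound is $\frac{O(1)}{1+|x|}$ and not something larger.

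The main obstacle, as flagged in the introduction, is the coupling: unlike the Schr\"odinger case where the radial equation factors through a single $\frac{\sin\theta}{x}$, here $(\log R_n)'$ involves the \emph{sum} $\frac{g_1\sin\theta_1}{x}+\frac{g_2\sin\theta_2}{x}$ against a periodic weight, and there is a third differential equation coupling $\theta_1$ and $\theta_2$. So the heart of the argument is a careful averaging/stationary-phase estimate: expand the periodic coefficients and the Floquet components $g_1,g_2$ in Fourier series, multiply out, and integrate $\int_{I_\ell}\frac{e^{i(\text{frequency})x}}{x}\,dx$ over each block, showing that only the diagonal/resonant frequency of the $n$-th block contributes a term of definite sign (giving the wanted decrease of $R_n$), while every other frequency — including all cross terms between bands $m\neq n$, which are non-resonant precisely by the hypotheses $k(\lambda_i)\neq k(\lambda_j)$ and $k(\lambda_i)+k(\lambda_j)\neq\pi$ — integrates to an $O(\ell^{-1})$ oscillatory error that is summable over $\ell$. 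Controlling the feedback of the accumulated phase errors on these frequencies (a Gr\"onwall-type bootstrap, exactly the content of Lemmas \ref{lemma-100} and \ref{lemma1.5} in the adapted form) is the delicate point; once that robust estimate is in hand, the inductive block-by-block construction and the passage from decay of $R_n$ to $L^2$ membership follow the template of \cite{LO17,JL19,L19stark}.
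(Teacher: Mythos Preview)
Your high-level architecture matches the paper's: assemble $V$ as a concatenation of Wigner--von Neumann type blocks on intervals $[T_\ell,T_{\ell+1})$ (and their mirrors), each block tuned to one $\lambda_n$, cycle through $n=1,\dots,N$, verify that on a block the targeted $R_n$ drops by a definite power while every other $R_m$ changes by at most a factor $\leq 2$ (this is precisely the content of Lemmas \ref{lemma-100}, \ref{lemma1.5} and Proposition \ref{prop5.3}), then smooth the transitions. The non-resonance hypotheses $k(\lambda_i)\neq k(\lambda_j)$ and $k(\lambda_i)+k(\lambda_j)\neq\pi$ enter exactly where you say.

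The implementation, however, differs in one substantive way. You propose to keep both Pr\"ufer angles $\theta_1,\theta_2$ and attack the radial equation $(\ln R_n)'=\frac{V}{\omega}\bigl(\abs{g_1}^2\sin2\theta_1-\abs{g_2}^2\sin2\theta_2\bigr)$ by a two-frequency Fourier/oscillatory-integral analysis of the sum, with the block potential taken to be an explicit trig polynomial in the resonant frequency. The paper instead first \emph{collapses the two angles to one}: since $2\theta_2-2\theta_1=\Gamma_1(x)$ is $1$-periodic, a trigonometric identity gives
\[
\abs{g_1}^2\sin2\theta_1-\abs{g_2}^2\sin2\theta_2=\Psi(x)\sin\xi(x),
\]
with $\Psi>0$ periodic and a single phase $\xi=2\theta_1+\Gamma_2$ satisfying $\xi'=2k+\delta'+O(V)$ for periodic $\delta$. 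This puts the radial equation in the one-angle Schr\"odinger form $(\ln R)'=\frac{V}{\omega}\Psi\sin\xi$, and the block potential is then defined \emph{implicitly} by $V(x)=-\frac{\omega C}{x-b}\sin\xi(x)$ with $\xi$ solving its own nonlinear ODE with prescribed initial phase. That choice makes the integrand $(\ln R)'=-\frac{C\Psi}{x-b}\sin^2\xi$ exactly resonant and sign-definite, so Lemma \ref{lemma-100} is immediate; the cross-term estimate (Lemma \ref{lemma1.5}) then reduces to a single oscillatory integral handled by Lemma \ref{L2}. Your route could be made to work, but it requires the genuinely two-parameter averaging you describe; the paper's reduction to a single $\xi$ sidesteps that analysis entirely and lets the authors import the \cite{LO17} construction almost verbatim.
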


\begin{corollary}
	Choose any band $(a_j,b_j)$. Let $e_j\in (a_j,b_j)$ with $k(e_j)=\frac{\pi}{2}$. 
	Suppose $\{\lambda_n\}_{n=1}^N$ is a finite set of distinct points in $(a_j,e_j)$ or $(e_j,b_j)$, such that  quasimomenta $\{k(\lambda_n)\}_{n=1}^N$ are different. Then there exist potentials $V\in C^\infty (\mathbb{R})$ with
	\begin{align}
		V(x)=\frac{O(1)}{1+\abs{x}}, \ \abs{x}\to\infty,\nonumber
	\end{align}
	such that $\{\lambda_n\}_{n=1}^N$ are eigenvalues of $L$. 
\end{corollary}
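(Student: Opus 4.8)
The plan is to deduce this directly from Theorem~\ref{theorem2}; the content of the corollary is merely to exhibit a geometrically transparent family of finite sets that automatically satisfy the non-resonance hypothesis $k(\lambda_i)+k(\lambda_j)\neq\pi$. So the only thing that really needs checking is that hypothesis, together with the distinctness of the quasimomenta. The latter is assumed in the statement, and in any case is automatic: on the band $[a_j,b_j]$ the quasimomentum $k$ is strictly monotone and maps $[a_j,b_j]$ bijectively onto $[0,\pi]$, so distinct points $\lambda_n$ lying in one band have distinct values $k(\lambda_n)$, with $e_j$ the unique preimage of $\pi/2$.

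First I would recall from Floquet theory (as quoted after \eqref{Floquetexponet}) that $k|_{[a_j,b_j]}$ is either strictly increasing from $0$ to $\pi$ or strictly decreasing from $\pi$ to $0$. In the increasing case, every $\lambda\in(a_j,e_j)$ has $k(\lambda)\in(0,\pi/2)$ and every $\lambda\in(e_j,b_j)$ has $k(\lambda)\in(\pi/2,\pi)$; in the decreasing case the roles of these two intervals are interchanged. Either way, once we have fixed whether $\{\lambda_n\}_{n=1}^N\subset(a_j,e_j)$ or $\{\lambda_n\}_{n=1}^N\subset(e_j,b_j)$, all of the $k(\lambda_n)$ lie in a common open half $(0,\pi/2)$, or else all lie in a common open half $(\pi/2,\pi)$.

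Next I would conclude the verification: if all $k(\lambda_n)\in(0,\pi/2)$ then $k(\lambda_i)+k(\lambda_j)<\pi$ for every $i,j\in\{1,\dots,N\}$, while if all $k(\lambda_n)\in(\pi/2,\pi)$ then $k(\lambda_i)+k(\lambda_j)>\pi$ for every $i,j$; in both cases $k(\lambda_i)+k(\lambda_j)\neq\pi$. Since moreover $\{\lambda_n\}_{n=1}^N\subset(a_j,b_j)\subset\cup_j(a_j,b_j)$ and the $k(\lambda_n)$ are distinct, every hypothesis of Theorem~\ref{theorem2} is met, and the $C^\infty$ potential $V$ with $V(x)=O(1)/(1+\abs{x})$ provided by that theorem makes $\{\lambda_n\}_{n=1}^N$ eigenvalues of $L$, proving the corollary.

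There is no genuine obstacle here: the corollary is a bookkeeping consequence of Theorem~\ref{theorem2}. The only (trivial) point to be careful about is to use the \emph{open} subintervals $(a_j,e_j)$ and $(e_j,b_j)$, so that the resulting bounds on $k(\lambda_i)+k(\lambda_j)$ are strict, i.e.\ strictly less than $\pi$ or strictly greater than $\pi$, rather than merely $\leq\pi$ or $\geq\pi$; this is exactly what rules out the resonant value $\pi$.
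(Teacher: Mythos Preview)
Your proposal is correct and is exactly the intended argument: the paper gives no separate proof for this corollary, treating it as an immediate consequence of Theorem~\ref{theorem2}, and the only thing to check is the non-resonance condition $k(\lambda_i)+k(\lambda_j)\neq\pi$, which you verify cleanly via the monotonicity of $k$ on the band. Your additional remark that distinctness of the $k(\lambda_n)$ is in fact automatic (not merely assumed) is a nice observation.
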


\begin{theorem}\label{theorem3}
	Suppose $\{\lambda_n\}_{n=1}^\infty\subset \cup_j(a_j,b_j)$ such that  quasimomenta $\{k(\lambda_n)\}_{n=1}^\infty$ are different. Suppose that for any $i,j, k(\lambda_i)+k(\lambda_j)\neq \pi.$ Then for any function $h(x)$ with $\lim_{\abs{x}\to\infty}\abs{h(x)}=\infty$, 
	there exist potentials $V\in C^\infty(\mathbb{R})$ with
	\begin{align}
	\abs{V(x)}\leq \frac{\abs{h(x)}}{1+\abs{x}},\nonumber
\end{align}
	such that $\{\lambda_n\}_{n=1}^\infty$ are eigenvalues of $L$. 
\end{theorem}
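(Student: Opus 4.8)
The plan is to transport the problem to the coupled Pr\"ufer system of Lemma~\ref{lemmaprufer} and then to build $V$ as a countable superposition of smooth Wigner--von Neumann bumps, one attached to each target $\lambda_n$, the $n$-th bump being activated only on $\{|x|\ge x_n\}$ with escape points $x_n\to\infty$ chosen extremely sparse. Writing $\omega_n=2k(\lambda_n)$ for the resonant frequency attached to $\lambda_n$ --- the frequency for which the averaged right-hand side of (b) of Lemma~\ref{lemmaprufer} acquires a non-oscillatory component --- we take $V_n(x)=\chi_n(x)\,\frac{a_n}{1+|x|}\sin(\omega_n x+\phi_n^{+})$ for $x>0$, and the same with $\phi_n^{+}$ replaced by $\phi_n^{-}$ for $x<0$, where $\chi_n\in C^\infty$ equals $1$ on $\{|x|\ge 2x_n\}$ and $0$ on $\{|x|\le x_n\}$, $a_n>0$ is a coupling constant, and $\phi_n^{\pm}$ are phases to be chosen. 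Since $x_n\to\infty$, on every bounded interval only finitely many $V_n$ are nonzero, so $V:=\sum_n V_n\in C^\infty(\mathbb{R})$ automatically, and $|V(x)|\le\big(\sum_{n:\,x_n\le|x|}a_n\big)\big/(1+|x|)$.

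The engine is the pair of a~priori estimates \eqref{RpmxC}--\eqref{Rj1.5R0}. Along a stretch on which only $V_n$ is active, the Pr\"ufer radius $R(\cdot,\lambda_n)$ of the solution we track decays like a fixed power $(x/x_n)^{-\gamma(\lambda_n)a_n}$ with $\gamma(\lambda_n)>0$ (since $\lambda_n$ lies in an open band, $k(\lambda_n)\in(0,\pi)$ and this non-oscillatory gain is genuine), while for every $m\neq n$ the radius $R(\cdot,\lambda_m)$ is multiplied only by a factor $1+O(a_n/x_n)$. The latter is exactly where the hypotheses enter: distinct quasimomenta give $\omega_m\neq\omega_n$ and $k(\lambda_m)+k(\lambda_n)\neq\pi$ rules out $\omega_m+\omega_n\equiv0\ (\mathrm{mod}\ 2\pi)$, so the product $\sin(\omega_n x+\cdot)\times(\text{oscillation at }\pm\omega_m)$ carries no stationary part and its $\tfrac{a_n}{1+|x|}$-weighted integral over $\{|x|\ge x_n\}$ converges and is $O(a_n/x_n)$. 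One then fixes, for each $n$, a coupling $a_n$ with $\gamma(\lambda_n)a_n>1$ (so the tracked solution will be comfortably $L^2$ near each infinity), and chooses $x_n$ inductively so large that: (i) $\sum_n a_n/x_n<\infty$, forcing the interaction products $\prod_{m\neq n}(1+O(a_m/x_m))$ to converge; (ii) $x_N$ grows faster than $S_N:=\sum_{n\le N}a_n$; and (iii) $|h(x)|\ge S_N$ whenever $|x|\ge x_N$. For $x_N\le|x|<x_{N+1}$ this gives $|V(x)|\le S_N/(1+|x|)\le|h(x)|/(1+|x|)$ and $|V(x)|\le S_N/(1+x_N)\to0$ as $|x|\to\infty$, so $V$ meets both requirements.

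It remains to turn this into a genuine eigenfunction of $L$ on all of $\mathbb{R}$: for each $n$, to produce a solution of \eqref{perturbeddiracequation} at $\lambda=\lambda_n$ that is $L^2$ at both $+\infty$ and $-\infty$. Near $+\infty$ the bump $V_n$ selects a one-dimensional family of power-decaying solutions, and as $\phi_n^{+}$ runs through a period the value of this family at a fixed reference point $x_0\in(-x_1,x_1)$ (where $V\equiv0$) sweeps a full circle in projective space, with $O(\sum_{m\neq n}a_m/x_m)$-small dependence on $(\phi_m^{+})_{m\neq n}$ by \eqref{RpmxC}; symmetrically near $-\infty$ with $\phi_n^{-}$. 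Matching the $+\infty$- and $-\infty$-decaying solutions at $x_0$ (transported across the unperturbed central interval by the fixed periodic transfer map) is then a single equation --- equality of Pr\"ufer angles mod $\pi$ --- in $(\phi_n^{+},\phi_n^{-})$, hence solvable. A contraction-mapping (or degree-theoretic) argument on the product of phase circles, as in \cite{JL19,LO17,L19stark}, then yields a single admissible choice $(\phi_n^{\pm})_{n\ge1}$ for which all matchings hold at once; for the resulting $V$ every $\lambda_n$ is an eigenvalue of $L$, with $\int_{\mathbb{R}}R(x,\lambda_n)^2\,dx<\infty$ guaranteed by $\gamma(\lambda_n)a_n>1$ and the convergence of the interaction products.

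The main obstacle is twofold. First, \eqref{RpmxC}--\eqref{Rj1.5R0} must be established for the genuinely three-variable Dirac Pr\"ufer system: unlike the Schr\"odinger case, equation (b) of Lemma~\ref{lemmaprufer} couples to both angles and its driving term is $\frac{g_1(x)\sin\theta_1(x)}{x}+\frac{g_2(x)\sin\theta_2(x)}{x}$ rather than a single $\frac{\sin\theta(x)}{x}$, so one must expand $g_1,g_2$ in Fourier series and control the resulting double oscillatory integrals uniformly in the extra angle governed by (c) of Lemma~\ref{lemmaprufer}; this is precisely the content of Lemmas~\ref{lemma-100} and \ref{lemma1.5}, and it is where the non-resonance hypothesis $k(\lambda_i)+k(\lambda_j)\neq\pi$ really bites. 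Second, in the countable case one tunes infinitely many phases against infinitely many matching conditions simultaneously, so convergence of the fixed-point scheme hinges on the summability $\sum_n a_n/x_n<\infty$; the role of the hypothesis $|h(x)|\to\infty$ is exactly to buy enough freedom in placing the $x_n$ to achieve this summability while still respecting $|V(x)|\le|h(x)|/(1+|x|)$.
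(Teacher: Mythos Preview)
Your construction differs substantially from the paper's. The paper does \emph{not} use a superposition of explicit Wigner--von Neumann bumps $\frac{a_n}{1+|x|}\sin(2k(\lambda_n)x+\phi_n^{\pm})$. Instead it builds $V$ \emph{piecewise}: on each block $[T_{r},T_{r+1}]$ (subdivided, in the countable case, into $N_r$ sub-pieces) the potential is the self-consistent expression \eqref{potentialconstruction326}, namely $V(x)=-\frac{\omega C}{x-b}\sin\xi(x)$, where $\xi$ is the \emph{actual} Pr\"ufer angle of the solution being tracked, obtained by solving the nonlinear ODE \eqref{differentialeuqation326}. With this choice $R'/R=-\frac{C\Psi(x)}{x-b}\sin^{2}\xi(x)\le 0$ identically, so the decay \eqref{RpmxC} holds irrespective of how $\xi$ drifts, and no phase-matching or fixed-point step is needed: one fixes an initial condition at the center and lets the potential follow that solution outward on both half-lines.

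There is a genuine gap in your argument. You invoke \eqref{RpmxC}--\eqref{Rj1.5R0} and Lemmas~\ref{lemma-100}, \ref{lemma1.5} as if they applied to your explicit bump $\frac{a_n}{1+|x|}\sin(\omega_n x+\phi_n)$, but those results are proved only for the self-consistent potential \eqref{potential}. In the periodic setting the Pr\"ufer angle obeys $\xi'=2k+\delta'(x)+O(V)$ with $\delta$ a nontrivial $1$-periodic function (see \eqref{pruferxi}), so $\xi(x)\approx 2kx+\delta(x)+c$ and the product $\Psi(x)\sin(\omega_n x+\phi_n)\sin\xi(x)$ does \emph{not} obviously have a sign-definite non-oscillatory part: after averaging one is left with $\frac{a_n}{2\omega x}\Psi(x)\cos(\delta(x)+c-\phi_n)$, whose period-mean is $\mathrm{Re}\bigl[e^{-i(c-\phi_n)}\int_0^1\Psi e^{-i\delta}\bigr]$, and you have neither shown $\int_0^1\Psi e^{-i\delta}\neq 0$ nor controlled the slow drift of $c$ (which can be of order $a_n\ln x$ and thus sweep through all phases). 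The self-consistent choice in the paper sidesteps exactly this difficulty. Likewise, the contraction/degree argument you attribute to \cite{JL19,LO17,L19stark} is not what those references do; they also use the piecewise self-consistent construction, and the data $\xi_0,\xi_1$ in Proposition~\ref{prop5.3} are simply the endpoint values of the tracked solution's angle, not free parameters in a fixed-point scheme.
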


\section{Pr\"ufer transformation based on Floquet solution}

In this section, we  modify the Pr\"ufer transformation based on Floquet solution. 
In the following, we always assume that $\lambda\in \cup_{j=-\infty}^\infty (a_j,b_j)$.
Recall that $g=g(x,\lambda)$ is the Floquet solution of \eqref{periodicdiracequation} with $g(x+1,\lambda)=e^{ik(\lambda)}g(x,\lambda)$, one has 
\begin{align}\label{varphixperiodic}
	g(x,\lambda)=e^{ik(\lambda)x}h(x,\lambda),
\end{align}
where $h(x,\lambda)\in \mathbb{C}^{2}$ is a $1$-periodic function with respect to $x$.

Define the Wronskian 
\begin{align}
	W(y,g):&=y_1(x,\lambda)g_2(x,\lambda)-y_2(x,\lambda)g_1(x,\lambda)\nonumber,
\end{align}
one obtains
\begin{align}
	W(\bar{g},g)=2i{\rm{Im}}(\bar{g}_1(x,\lambda)g_2(x,\lambda)).\nonumber
\end{align}

\begin{lemma} For any  $\lambda\in \cup_j (a_j,b_j)$, one has
	\begin{align}\label{Wneq0}
		W(\bar{g},g)\neq 0.
	\end{align}
Therefore, for any $x\in \mathbb{R}$, we have $g_1(x,\lambda)\neq 0$ and $ g_2(x,\lambda)\neq 0$. 
\end{lemma}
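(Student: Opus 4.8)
The plan is to show that the Wronskian $W(\bar g, g)$ is a nonzero constant on each band. First I would observe that both $g$ and $\bar g$ solve the same periodic Dirac equation \eqref{periodicdiracequation}: since $p,q$ are real and $\lambda\in(a_j,b_j)\subset\mathbb{R}$, complex conjugation of \eqref{periodicdiracequation} shows $\bar g$ is again a solution. For the Dirac system $Jy'+Qy=\lambda y$ with $J=\left(\begin{smallmatrix}0&-1\\1&0\end{smallmatrix}\right)$ and $Q$ symmetric, the Wronskian of any two solutions $y,z$, namely $W(y,z)=y_1z_2-y_2z_1 = -\langle Jy,\bar z\rangle$-type bilinear form, has zero derivative: a direct computation using the equation gives $\frac{d}{dx}W(y,z) = (y_1'z_2+y_1z_2'-y_2'z_1-y_2z_1')$, and substituting $y_1'=-(p y_1+q y_2-\lambda y_1)\cdot(\text{sign})$ etc. from the equation causes all terms to cancel because $Q$ is symmetric. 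So $W(\bar g,g)$ is constant in $x$; it suffices to show it is nonzero at one point, equivalently that it is not identically zero.

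The key step is therefore: $W(\bar g,g)\equiv 0$ would force $g$ to be (up to a scalar) real, and a real Floquet solution cannot exist in the interior of a band. Concretely, $W(\bar g,g)=0$ means $\bar g_1 g_2 = \bar g_2 g_1$ pointwise, i.e. $g_2/g_1$ (where defined) is real, so $g$ is a complex multiple of a real vector-valued solution $\tilde g$. Then $\tilde g(x+1,\lambda) = e^{-ik(\lambda)}\,(\text{const})\cdot\tilde g$ must still be real for all $x$, which forces $e^{ik(\lambda)}\in\{1,-1\}$, i.e. $k(\lambda)\in\{0,\pi\}$. But by Floquet theory $k(\lambda)$ takes the values $0$ and $\pi$ only at the band edges $a_j,b_j$, and we have assumed $\lambda\in(a_j,b_j)$ — a contradiction. (One must be slightly careful about the case where $g_1$ or $g_2$ vanishes somewhere while deducing reality of $g_2/g_1$; this is handled by noting $g$ and $\bar g$ are linearly dependent as solutions of a first-order system as soon as the constant $W(\bar g,g)$ vanishes, since two solutions of a $2\times 2$ first-order system are dependent iff their Wronskian vanishes.)

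Finally, from $W(\bar g,g)\neq 0$ and the identity $W(\bar g,g)=2i\,\mathrm{Im}(\bar g_1(x,\lambda) g_2(x,\lambda))$ already recorded in the excerpt, we get $\mathrm{Im}(\bar g_1 g_2)\neq 0$ for every $x$, which immediately forces $g_1(x,\lambda)\neq 0$ and $g_2(x,\lambda)\neq 0$ for all $x$ (if either vanished at some point, the product $\bar g_1 g_2$ would be $0$ there, hence have zero imaginary part). The main obstacle is the middle step — ruling out $W\equiv 0$ — and the cleanest route is the linear-dependence criterion for first-order systems combined with the Floquet multiplier being non-real in the band interior; everything else is the standard constancy-of-Wronskian computation for symmetric-coefficient Dirac systems.
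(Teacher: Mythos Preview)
Your proposal is correct and follows essentially the same route as the paper. The paper's proof is the terse version of your middle step: if $W(\bar g,g)=0$ then $g=c\bar g$ for some $c\neq 0$, and comparing the Floquet relations $g(x+1)=e^{ik}g(x)$ and $\bar g(x+1)=e^{-ik}\bar g(x)$ forces $e^{ik}=e^{-ik}$, which is impossible for $\lambda$ in an open band; your detour through a real solution $\tilde g$ reaches the same contradiction, and you additionally spell out the constancy of the Wronskian and the final nonvanishing of $g_1,g_2$, which the paper leaves implicit.
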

	\begin{proof}
		Otherwise, we have 
		$g(x,\lambda)=c\bar{g}(x,\lambda)$ for some $c\neq 0.$ Then by \eqref{Floquetexponet} one has
		\begin{align}
			c\bar{g}(x+1,\lambda)=g(x+1,\lambda)=e^{ik(\lambda)}g(x,\lambda)=ce^{ik(\lambda)}\bar{g}(x,\lambda).\nonumber
		\end{align}
	Thus, by $\bar{g}(x+1,\lambda)=e^{-ik(\lambda)}\bar{g}(x,\lambda)$ one obtains the contradiction.
	\end{proof}
Let
\begin{align}
	W(\bar{g},g)=i\omega(\neq0).\nonumber
\end{align}
Denote by 
\begin{align}
g_j(x,\lambda)=\abs{g_j(x,\lambda)}e^{i\gamma_j(x,\lambda)}, \ j=1,2.\nonumber
\end{align}
By \eqref{varphixperiodic} one has that
\begin{align}\label{gammavarphi}
	\gamma_j(x,\lambda)=k(\lambda)x+\varphi_j(x,\lambda),\ j=1,2,
\end{align}
where $\varphi_j(x,\lambda)\mod 2\pi$ are $1$-periodic functions with respect to $x$.
\begin{proposition}\item[(a)] $2\abs{g_1(x,\lambda)}\abs{g_2(x,\lambda)}\sin(\gamma_2(x,\lambda)-\gamma_1(x,\lambda))=\omega.$
	\item[(b)]\begin{align}
		\gamma_1'(x,\lambda)=\frac{\omega(\lambda+p(x))}{2\abs{g_1(x,\lambda)}^2},\ \gamma_2'(x,\lambda)=\frac{\omega(\lambda-p(x))}{2\abs{g_2(x,\lambda)}^2}.\nonumber
	\end{align}
\end{proposition}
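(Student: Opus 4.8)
The plan is to derive both parts directly from the Floquet representation of $g$ and the constancy of the Wronskian. For part (a), recall that $W(\bar g,g)=\bar g_1 g_2-\bar g_2 g_1=i\omega$. Writing $g_j=\abs{g_j}e^{i\gamma_j}$, one computes $\bar g_1 g_2-\bar g_2 g_1=\abs{g_1}\abs{g_2}\bigl(e^{i(\gamma_2-\gamma_1)}-e^{i(\gamma_1-\gamma_2)}\bigr)=2i\abs{g_1}\abs{g_2}\sin(\gamma_2-\gamma_1)$. Comparing with $i\omega$ gives the identity immediately. The only thing to check is that this Wronskian is indeed $x$-independent so that $\omega$ is a genuine constant; this follows from the standard fact that the Wronskian of two solutions of the Dirac system $Jy'+(\ldots)y=\lambda y$ is constant in $x$, because $J=\left(\begin{smallmatrix}0&-1\\1&0\end{smallmatrix}\right)$ is antisymmetric and the potential matrix is symmetric — a one-line computation with $\frac{d}{dx}W(\bar g,g)$ using the equations for $g$ and $\bar g$ (note $\bar g$ solves the same real-coefficient equation since $\lambda$ is real). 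The previous lemma already guarantees $\omega\neq 0$ and $\abs{g_j}\neq 0$, so the formula is meaningful.

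For part (b), I would differentiate the polar representation $g_j=\abs{g_j}e^{i\gamma_j}$. The cleanest route is to note that $\gamma_j'=\operatorname{Im}\bigl(g_j'/g_j\bigr)=\operatorname{Im}\bigl(\bar g_j g_j'\bigr)/\abs{g_j}^2$. So the task reduces to computing $\operatorname{Im}(\bar g_1 g_1')$ and $\operatorname{Im}(\bar g_2 g_2')$ from the Dirac equation. Reading off the two scalar equations from \eqref{periodicdiracequation} with $V\equiv 0$: $-g_2'+p\,g_1+q\,g_2=\lambda g_1$ and $g_1'+q\,g_1-p\,g_2=\lambda g_2$, i.e. $g_1'=(\lambda+p)g_2-q g_1$ and $g_2'=-(\lambda-p)g_1+q g_2$. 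Hence $\bar g_1 g_1'=(\lambda+p)\bar g_1 g_2-q\abs{g_1}^2$, whose imaginary part is $(\lambda+p)\operatorname{Im}(\bar g_1 g_2)$; and by part (a) (in the form $2\operatorname{Im}(\bar g_1 g_2)=\omega$, reading off $\operatorname{Im}(\bar g_1 g_2)=\tfrac{\omega}{2}$ from $W(\bar g,g)=i\omega$) this equals $\tfrac{\omega(\lambda+p)}{2}$, giving $\gamma_1'=\tfrac{\omega(\lambda+p)}{2\abs{g_1}^2}$. Similarly $\bar g_2 g_2'=-(\lambda-p)\bar g_2 g_1+q\abs{g_2}^2$, so $\operatorname{Im}(\bar g_2 g_2')=-(\lambda-p)\operatorname{Im}(\bar g_2 g_1)=(\lambda-p)\operatorname{Im}(\bar g_1 g_2)=\tfrac{\omega(\lambda-p)}{2}$, yielding $\gamma_2'=\tfrac{\omega(\lambda-p)}{2\abs{g_2}^2}$.

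There is no serious obstacle here; the statement is essentially a bookkeeping exercise converting the linear ODE system into polar coordinates, and everything hinges on the already-established non-vanishing of $\omega$ and of the components $g_j$. The one point requiring a modicum of care is sign and convention tracking: making sure the scalar form of \eqref{periodicdiracequation} is transcribed correctly (the off-diagonal $\pm 1$ from $J$ and the $\pm p$ on the diagonal), and that the identification $\operatorname{Im}(\bar g_1 g_2)=\omega/2$ is consistent with the definition $W(\bar g,g)=i\omega$ used in part (a). If one instead prefers a coordinate-free derivation, an alternative is to write $g=e^{ik x}h$ with $h$ periodic and differentiate, but that introduces $h'$ and is no cleaner; I would stick with differentiating $\abs{g_j}e^{i\gamma_j}$ directly. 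I would also remark, as a sanity check, that summing does not close the system — the quantities $\abs{g_1}^2$ and $\abs{g_2}^2$ are themselves governed by the remaining (real-part) equations, which is exactly why the later analysis needs the extra coupling discussed in the introduction; but for this proposition only the two angle derivatives are needed.
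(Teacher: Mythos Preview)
Your proof is correct and follows essentially the same route as the paper: part (a) is read off from $W(\bar g,g)=i\omega$ in polar form, and part (b) uses $\gamma_j'=\operatorname{Im}(g_j'/g_j)$ together with the scalar equations $g_1'=(\lambda+p)g_2-qg_1$, $g_2'=-(\lambda-p)g_1+qg_2$ and the identity $\operatorname{Im}(\bar g_1 g_2)=\omega/2$. The paper's write-up is terser but the computation is identical.
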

\begin{proof}
	\item[(a)] By Wronskian of $\bar{g}$ and $g$ we obtain
	\begin{align}
		2{\rm{Im}}(\bar{g}_1(x,\lambda)g_2(x,\lambda))=\omega,\nonumber
	\end{align}
hence one obtains the conclusion.
\item[(b)] By \eqref{periodicdiracequation} and 
\begin{align}
	\ln g_1(x,\lambda)=\ln\abs{g_1(x,\lambda)}+i\gamma_1(x,\lambda),\nonumber
\end{align}
one has
\begin{align}
	\gamma_1'(x,\lambda)&={\rm{Im}}\frac{g_1'(x,\lambda)}{g_1(x,\lambda)}\nonumber\\
	&={\rm{Im}}\frac{\lambda g_2(x,\lambda)-q(x)g_1(x,\lambda)+p(x)g_2(x,\lambda)}{g_1(x,\lambda)}\nonumber\\
	&=\frac{\lambda+p(x)}{\abs{g_1(x,\lambda)}^2}{\rm{Im}}(\bar{g}_1(x,\lambda)g_2(x,\lambda))\nonumber\\
	&=\frac{\omega(\lambda+p(x))}{2\abs{g_1(x,\lambda)}^2}.\nonumber
\end{align}
The calculation of $\gamma_2'(x,\lambda)$ is similar, we omit the steps.
\end{proof}

Now we introduce the modified Pr\"ufer transformation based on Floquet solution.

Let $y(x,\lambda)$ be the solution of \eqref{perturbeddiracequation}.
Define $\rho(x,\lambda)=\abs{\rho(x,\lambda)}^{i\eta(x,\lambda)}$ by 
\begin{align}
	\begin{pmatrix}\label{definitionofrho}
		y_1(x,\lambda)\\
		y_2(x,\lambda)
		\end{pmatrix}&=\frac{1}{2i}\left[\begin{pmatrix}
		\rho(x,\lambda)g_1(x,\lambda)-\bar{\rho}(x,\lambda)\bar{g_1}(x,\lambda)\\
		\rho(x,\lambda)g_2(x,\lambda)-\bar{\rho}(x,\lambda)\bar{g_2}(x,\lambda)
	\end{pmatrix}
	\right]\\
	&={\rm{Im}}
	\begin{pmatrix}
		\rho(x,\lambda)g_1(x,\lambda)\\
		\rho(x,\lambda)g_2(x,\lambda)
	\end{pmatrix}.\nonumber
\end{align}
We can normalize $\eta(x,\lambda)$ by $\eta(0,\lambda)\in (0,2\pi]$ and $\eta(x,\lambda)$ being continuous. Denote by $R(x,\lambda)=\abs{\rho(x,\lambda)}$.
Then one has
\begin{align}
	y_j(x,\lambda)=R(x,\lambda)\abs{g_j(x,\lambda)}\sin\theta_j(x,\lambda), j=1,2,
\end{align}
where $\theta_j(x,\lambda)=\eta(x,\lambda)+\gamma_j(x,\lambda).$ 
By \eqref{definitionofrho}, one can obtain
\begin{align}\label{rhoxW}
	\rho(x,\lambda)=\frac{2}{\omega}W(\bar{g},y).
\end{align}

\begin{proposition}
	For some $C>0$, we have
	\begin{align}\label{urcomparible}
		\frac{1}{C}\sqrt{\abs{y_1(x,\lambda)}^2+\abs{y_2(x,\lambda)}^2}\leq R(x,\lambda)\leq C\sqrt{\abs{y_1(x,\lambda)}^2+\abs{y_2(x,\lambda)}^2}.
	\end{align}
\end{proposition}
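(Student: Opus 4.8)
The plan is to reduce the claimed two-sided bound to a uniform estimate in $x$ of the quantity
\begin{align}
B(x,\lambda):=\abs{g_1(x,\lambda)}^2\sin^2\theta_1(x,\lambda)+\abs{g_2(x,\lambda)}^2\sin^2\theta_2(x,\lambda).\nonumber
\end{align}
Indeed, from the modified Pr\"ufer representation $y_j(x,\lambda)=R(x,\lambda)\abs{g_j(x,\lambda)}\sin\theta_j(x,\lambda)$ one gets $\abs{y_1(x,\lambda)}^2+\abs{y_2(x,\lambda)}^2=R(x,\lambda)^2\,B(x,\lambda)$, so it suffices to produce constants $0<c_1\le c_2$ (depending only on $\lambda$) with $c_1\le B(x,\lambda)\le c_2$ for all $x\in\mathbb{R}$; the Proposition then follows with $C=\max(\sqrt{c_2},1/\sqrt{c_1})$.

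First I would record that $\abs{g_j(\cdot,\lambda)}$ is a continuous $1$-periodic function of $x$: this is immediate from \eqref{varphixperiodic}, since $\abs{g_j(x,\lambda)}=\abs{h_j(x,\lambda)}$ with $h(\cdot,\lambda)$ being $1$-periodic. By the preceding Lemma (in particular \eqref{Wneq0}), $g_j(x,\lambda)\neq0$ for every $x$, so $\abs{g_j(\cdot,\lambda)}$ attains a strictly positive minimum $m_j>0$ and a finite maximum $M_j<\infty$ over one period. The upper bound $B(x,\lambda)\le M_1^2+M_2^2=:c_2$ is then trivial.

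For the lower bound, the key point — and the only step that requires any care — is that the two Pr\"ufer angles cannot be simultaneously close to integer multiples of $\pi$. Since $\theta_j(x,\lambda)=\eta(x,\lambda)+\gamma_j(x,\lambda)$, we have $\theta_2-\theta_1=\gamma_2-\gamma_1$, and part (a) of the preceding Proposition gives
\begin{align}
\abs{\sin(\theta_2(x,\lambda)-\theta_1(x,\lambda))}=\abs{\sin(\gamma_2(x,\lambda)-\gamma_1(x,\lambda))}=\frac{\abs{\omega}}{2\abs{g_1(x,\lambda)}\abs{g_2(x,\lambda)}}\ge\frac{\abs{\omega}}{2M_1M_2}=:2\delta>0.\nonumber
\end{align}
On the other hand, the elementary inequality $\abs{\sin(\theta_2-\theta_1)}\le\abs{\sin\theta_1}+\abs{\sin\theta_2}$ forces $\max(\abs{\sin\theta_1(x,\lambda)},\abs{\sin\theta_2(x,\lambda)})\ge\delta$ for every $x$. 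Choosing the index $j=j(x)$ that realizes this maximum, we obtain $B(x,\lambda)\ge m_{j(x)}^2\delta^2\ge\min(m_1^2,m_2^2)\,\delta^2=:c_1>0$, which completes the argument. Thus I do not expect a genuine obstacle here: the whole point is that the non-vanishing Wronskian $W(\bar g,g)=i\omega\neq0$ keeps $\abs{\sin(\theta_2-\theta_1)}$ bounded away from $0$, and hence prevents $\abs{y_1}$ and $\abs{y_2}$ from being small at the same point.
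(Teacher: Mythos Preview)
Your proof is correct, but the route differs from the paper's for the nontrivial inequality. For the easier direction (the left inequality $\tfrac{1}{C}\sqrt{|y_1|^2+|y_2|^2}\le R$), both you and the paper argue identically: $|y_j|=R|g_j||\sin\theta_j|\le R|g_j|$ and periodicity of $|g_j|$ finishes it. For the harder direction (the upper bound $R\le C\sqrt{|y_1|^2+|y_2|^2}$), the paper takes a shorter path: it uses the formula $\rho(x,\lambda)=\tfrac{2}{\omega}W(\bar g,y)$ and applies Cauchy--Schwarz to $|W(\bar g,y)|=|\bar g_1 y_2-\bar g_2 y_1|\le\sqrt{|g_1|^2+|g_2|^2}\sqrt{|y_1|^2+|y_2|^2}$, then invokes periodicity of $|g_1|^2+|g_2|^2$. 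Your argument instead bounds $B(x,\lambda)$ from below by combining part (a) of the earlier Proposition with the elementary inequality $|\sin(\theta_2-\theta_1)|\le|\sin\theta_1|+|\sin\theta_2|$ and the strict positivity of $\min_x|g_j(x,\lambda)|$. Both approaches ultimately rest on $\omega\neq 0$; the paper's Cauchy--Schwarz argument is a bit slicker since it bypasses the angle analysis entirely, while your version is more explicit about \emph{why} the equivalence holds in terms of the Pr\"ufer variables --- namely, that the two angles $\theta_1,\theta_2$ can never simultaneously be near integer multiples of $\pi$.
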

\begin{proof}
	Since
	\begin{align}
		\abs{W(\bar{g},y)}&=\abs{\bar{g_1}(x,\lambda)y_2(x,\lambda)-\bar{g_2}(x,\lambda)y_1(x,\lambda)}\nonumber\\
		&\leq \sqrt{\abs{g_1(x,\lambda)}^2+\abs{g_2(x,\lambda)}^2}\sqrt{\abs{y_1(x,\lambda)}^2+\abs{y_2(x,\lambda)}^2}.\nonumber
	\end{align}
Then by \eqref{rhoxW} and the periodicity of $\abs{g_1(x,\lambda)}^2+\abs{g_2(x,\lambda)}^2$, one has that for some $C>0$,
\begin{align}
	R(x,\lambda)\leq C\sqrt{\abs{y_1(x,\lambda)}^2+\abs{y_2(x,\lambda)}^2}.\nonumber
\end{align}
By \eqref{definitionofrho} and the periodicity of $\abs{g_1(x,\lambda)}^2+\abs{g_2(x,\lambda)}^2$, one directly obtains the left half.
\end{proof}

By \eqref{urcomparible}, one has that $y(\cdot,\lambda)\in L^2(\mathbb{R})\oplus L^2(\mathbb{R})$ is equivalent to that $R(\cdot,\lambda)\in L^2(\mathbb{R})$. Recall that the perturbation
\begin{align}
	V=\begin{pmatrix}
		V(x)&0\\
		0&-V(x)
	\end{pmatrix},\nonumber
\end{align}
we have the following Lemma.

\begin{lemma}\label{lemmaprufer}
	\item[(a)]\begin{align}
		\frac{\rho'(x,\lambda)}{\rho(x,\lambda)}=&\frac{2V(x)}{\omega}\abs{g_1(x,\lambda)}^2e^{-i\theta_1(x,\lambda)}\sin\theta_1(x,\lambda)\nonumber\\
		&-\frac{2V(x)}{\omega}\abs{g_2(x,\lambda)}^2e^{-i\theta_2(x,\lambda)}\sin\theta_2(x,\lambda).\nonumber
	\end{align}
\item[(b)]
\begin{align}
	\frac{R'(x,\lambda)}{R(x,\lambda)}=
	\frac{V(x)}{\omega}\bigg(\abs{g_1(x,\lambda)}^2\sin2\theta_1(x,\lambda)
	-\abs{g_2(x,\lambda)}^2\sin2\theta_2(x,\lambda)\nonumber\bigg).
\end{align}
\item[(c)] For $j=1,2$,
\begin{align}
	\theta_j'(x,\lambda)=&\gamma_j'(x,\lambda)\nonumber\\
	&-\frac{2V(x)}{\omega}\bigg(\abs{g_1(x,\lambda)}^2\sin^2\theta_1(x,\lambda)
	-\abs{g_2(x,\lambda)}^2\sin^2\theta_2(x,\lambda)\bigg).\nonumber
\end{align}
\end{lemma}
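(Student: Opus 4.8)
The plan is to work directly from the identity \eqref{rhoxW}, which writes $\rho(x,\lambda)=\frac{2}{\omega}W(\bar g,y)=\frac{2}{\omega}\big(\bar g_1(x,\lambda)y_2(x,\lambda)-\bar g_2(x,\lambda)y_1(x,\lambda)\big)$, and to differentiate it. Since $p,q$ are real and $\lambda\in\mathbb{R}$, the conjugate $\bar g$ solves the unperturbed equation \eqref{periodicdiracequation}, so I can substitute the first-order systems for $y$ (from \eqref{perturbeddiracequation}) and for $\bar g$ (from \eqref{periodicdiracequation}) into $\rho'=\frac{2}{\omega}\big(\bar g_1' y_2+\bar g_1 y_2'-\bar g_2' y_1-\bar g_2 y_1'\big)$. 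The key computational point is that this is essentially a Lagrange/Wronskian-derivative identity: all the terms produced by the periodic matrix $\begin{pmatrix}p&q\\q&-p\end{pmatrix}$ and by $\lambda$ cancel in pairs, and the only surviving contribution is the one generated by the perturbation $V$. Carrying out this cancellation yields
$$\rho'(x,\lambda)=\frac{2V(x)}{\omega}\big(\bar g_1(x,\lambda)y_1(x,\lambda)-\bar g_2(x,\lambda)y_2(x,\lambda)\big).$$

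Next I would divide by $\rho$ and pass to polar form. Using $\rho=Re^{i\eta}$, $g_j=|g_j|e^{i\gamma_j}$, $y_j=R|g_j|\sin\theta_j$ together with $\theta_j=\eta+\gamma_j$, each ratio $\bar g_j y_j/\rho$ collapses to $|g_j|^2 e^{-i\theta_j}\sin\theta_j$ (the common factor $R$ cancels and the phases recombine), which is exactly statement (a). Statements (b) and (c) then follow by taking real and imaginary parts: from $\log\rho=\log R+i\eta$ one gets $\rho'/\rho=R'/R+i\eta'$, hence $R'/R=\mathrm{Re}(\rho'/\rho)$ and $\eta'=\mathrm{Im}(\rho'/\rho)$. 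The elementary identities $\mathrm{Re}(e^{-i\theta}\sin\theta)=\tfrac12\sin 2\theta$ and $\mathrm{Im}(e^{-i\theta}\sin\theta)=-\sin^2\theta$ turn (a) into (b), and combining $\eta'=\mathrm{Im}(\rho'/\rho)$ with $\theta_j'=\eta'+\gamma_j'$ gives (c).

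I do not expect a genuine obstacle here: the proof is a direct computation. The only delicate point is the bookkeeping in differentiating the Wronskian — keeping careful track of which of the $p$-, $q$- and $\lambda$-terms cancel — and then performing the polar substitutions consistently, in particular the phase identity $\theta_j=\eta+\gamma_j$. The normalization $\eta(0,\lambda)\in(0,2\pi]$ with $\eta$ continuous guarantees that $\eta$, and hence each $\theta_j$, is a well-defined differentiable function, so the decomposition into real and imaginary parts is legitimate. One could alternatively differentiate the defining relation \eqref{definitionofrho} directly, but routing through \eqref{rhoxW} keeps the cancellation transparent and is the cleanest path.
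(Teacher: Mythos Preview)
Your proposal is correct and follows essentially the same route as the paper: differentiate the identity $\rho=\frac{2}{\omega}W(\bar g,y)$, use the equations \eqref{perturbeddiracequation} and \eqref{periodicdiracequation} so that all $p$, $q$, $\lambda$ terms cancel and only the $V$-contribution survives, then divide by $\rho$ and take real and imaginary parts of $(\ln\rho)'=(\ln R)'+i\eta'$ together with $\theta_j=\eta+\gamma_j$. Your write-up is in fact more explicit than the paper's, which merely records the outcome of the Wronskian differentiation and then says ``by \eqref{definitionofrho} we can obtain (a)'' and ``taking the real/imaginary part'' for (b) and (c).
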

\begin{proof}
 By \eqref{perturbeddiracequation}, \eqref{periodicdiracequation} and \eqref{rhoxW}, one has
	\begin{align}
		\rho'(x,\lambda)
		=&\frac{2}{\omega}(\bar{g_1}'(x,\lambda)y_2(x,\lambda)+\bar{g_1}(x,\lambda)y_2'(x,\lambda)\nonumber\\
		&-\bar{g_2}'(x,\lambda)y_1(x,\lambda)-\bar{g_2}(x,\lambda)y_1'(x,\lambda))\nonumber\\
		=&\frac{2}{\omega}(V(x)\bar{g_1}(x,\lambda)y_1(x,\lambda)-V(x)\bar{g}_2(x,\lambda)y_2(x,\lambda)).\nonumber
	\end{align}
Therefore, by \eqref{definitionofrho} we can obtain (a). 

Taking the real part of 
\begin{align}\label{lnrholnr}
	(\ln \rho(x,\lambda))'=(\ln R(x,\lambda))'+i\eta'(x,\lambda),
\end{align}
one obtains (b). Taking the imaginary part of \eqref{lnrholnr}
and applying $\theta_j(x,\lambda)=\gamma_j(x,\lambda)+\eta(x,\lambda)$ one has (c).
\end{proof}

\begin{proof}[\textbf{Proof of Theorem \ref{theorem1}}]
	By the assumption in the theorem, we suppose that for some small $\varepsilon>0$, for any $x>x_0>0$, one has
	\begin{align}
		\abs{V(x)}\leq \frac{\varepsilon}{x}.\nonumber
	\end{align}
	Therefore, by (b) of Lemma \ref{lemmaprufer} , one has
	\begin{align}
		\ln R(x,\lambda)-\ln R(x_0,\lambda)&\geq -C\varepsilon \int_{x_0}^x\frac{1}{t}dt\nonumber\\
		&= -C\varepsilon \ln \frac{x}{x_0},\nonumber
	\end{align}
	where $C>0$ so that for any $x\in\mathbb{R}$, $\abs{g_1(x,\lambda)}^2+\abs{g_2(x,\lambda)}^2<\omega C$. Let $\varepsilon$ be small so that $C\varepsilon<\frac{1}{2}$. One has that for any large enough $x$, 
	\begin{align}
		R(x,\lambda)\geq x^{-\frac{1}{2}}.\nonumber
	\end{align}
	Therefore, $R(\cdot,\lambda)\notin L^2(\mathbb{R})$. Hence, for any $\lambda\in \cup_{j=1}^\infty (a_j,b_j)$, $\lambda$ is not an eigenvalue of $L$.
\end{proof}
\section{Preparations}

Denote by
\begin{align}
	\Gamma_1(x,\lambda):=2\gamma_2(x,\lambda)-2\gamma_1(x,\lambda),\nonumber
\end{align}
by \eqref{gammavarphi}, we know that $\Gamma_1(x,\lambda)\mod2\pi$ is a $1$-periodic function in $x$. By the defintion of $\theta(x,\lambda)$, one has
\begin{align}
	2\theta_2(x,\lambda)-2\theta_1(x,\lambda)=\Gamma_1(x,\lambda).\nonumber
\end{align}
Therefore, 
\begin{align}
	&\abs{g_1(x,\lambda)}^2\sin2\theta_1(x,\lambda)
	-\abs{g_2(x,\lambda)}^2\sin2\theta_2(x,\lambda)\nonumber\\
	=&\abs{g_1(x,\lambda)}^2\sin2\theta_1(x,\lambda)-\abs{g_2(x,\lambda)}^2\sin(2\theta_1(x,\lambda)+\Gamma_1(x,\lambda))\nonumber\\
	\label{middleequation}=&\sin2\theta_1(x,\lambda)\left(\abs{g_1(x,\lambda)}^2-\abs{g_2(x,\lambda)}^2\cos\Gamma_1(x,\lambda)\right)\\
	&-\abs{g_2(x,\lambda)}^2\sin\Gamma_1(x,\lambda)\cos2\theta_1(x,\lambda).\nonumber
\end{align}
Denote by 
\begin{align}
	\Psi(x,\lambda)=&\bigg(\left(\abs{g_1(x,\lambda)}^2-\abs{g_2(x,\lambda)}^2\cos\Gamma_1(x,\lambda)\right)^2\nonumber\\
	&+\left(\abs{g_2(x,\lambda)}^2\sin\Gamma_1(x,\lambda)\right)^2\bigg)^{\frac{1}{2}}\nonumber\\
	=&\left(\abs{g_1(x,\lambda)}^4+\abs{g_2(x,\lambda)}^4-2\abs{g_1(x,\lambda)}^2\abs{g_2(x,\lambda)}^2\cos\Gamma_1(x,\lambda)\right)^{\frac{1}{2}}\nonumber.
\end{align}
Clearly, $\Psi(x,\lambda)$ is  a $1$-periodic function in $x$. We mention that for any $x\in\mathbb{R}$,  $\Psi(x,\lambda)> 0.$
Otherwise, one has
\begin{align}
	\abs{g_1(x,\lambda)}^4+\abs{g_2(x,\lambda)}^4-2\abs{g_1(x,\lambda)}^2\abs{g_2(x,\lambda)}^2\cos\Gamma_1(x,\lambda)=0.\nonumber
\end{align}
Then we can obtain 
\begin{align}
	\abs{g_1(x,\lambda)}=\abs{g_2(x,\lambda)},\nonumber
\end{align}
and 
\begin{align}
	\gamma_2(x,\lambda)-\gamma_1(x,\lambda)=\frac{\Gamma_1(x,\lambda)}{2}=n_0\pi,\nonumber
\end{align}
for some $n_0\in\mathbb{Z}$. Then we have
\begin{align}
	g_1(x,\lambda)=\pm g_2(x,\lambda),\nonumber
\end{align}
which can not happen by the fact $W(\bar{g},g)\neq 0$. Hence, by \eqref{middleequation}, we have
\begin{align}
	&\abs{g_1(x,\lambda)}^2\sin2\theta_1(x,\lambda)
	-\abs{g_2(x,\lambda)}^2\sin2\theta_2(x,\lambda)\nonumber\\
	=&\Psi(x,\lambda)\sin (2\theta_1(x,\lambda)+\Gamma_2(x,\lambda)),\label{middleequationnew}
\end{align}
where $\Gamma_2(x,\lambda)\mod 2\pi$ is also a $1$-periodic function in $x$, and 
\begin{align}
	\sin\Gamma_2(x,\lambda)&=-\frac{\abs{g_2(x,\lambda)}^2\sin\Gamma_1(x,\lambda)}{\Psi(x,\lambda)},\nonumber\\
	\cos\Gamma_2(x,\lambda)&=\frac{\abs{g_1(x,\lambda)}^2-\abs{g_2(x,\lambda)}^2\cos\Gamma_1(x,\lambda)}{\Psi(x,\lambda)}.\nonumber
\end{align}
Let $\xi(x,\lambda)=2\theta_1(x,\lambda)+\Gamma_2(x,\lambda)$,  
one obtains
\begin{lemma} 
We have
\begin{align}\label{pruferR}
		\frac{R'(x,\lambda)}{R(x,\lambda)}=\frac{V(x)}{\omega}\Psi(x,\lambda)\sin \xi(x,\lambda),
	\end{align}
and
\begin{align}
	\xi'(x,\lambda)=&2k(\lambda)+\delta'(x,\lambda)\nonumber\\
	&-\frac{2V(x)}{\omega}\left(\abs{g_1(x,\lambda)^2}-\abs{g_2(x,\lambda)}^2-\Psi(x,\lambda)\cos\xi(x,\lambda)\right),\label{pruferxi}
\end{align}
where $\delta(x,\lambda)=2\varphi_1(x,\lambda)+\Gamma_2(x,\lambda)$ and $\delta(x,\lambda)\mod2\pi$ is a $1$-periodic function in $x$.
\end{lemma}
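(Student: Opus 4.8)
The plan is to read off both identities directly from Lemma~\ref{lemmaprufer} after feeding in the trigonometric reduction carried out in \eqref{middleequation}--\eqref{middleequationnew}.

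\emph{First identity.} By part~(b) of Lemma~\ref{lemmaprufer} we have $R'(x,\lambda)/R(x,\lambda)=\frac{V(x)}{\omega}\bigl(\abs{g_1(x,\lambda)}^2\sin2\theta_1(x,\lambda)-\abs{g_2(x,\lambda)}^2\sin2\theta_2(x,\lambda)\bigr)$, and by \eqref{middleequationnew} the bracket equals exactly $\Psi(x,\lambda)\sin(2\theta_1(x,\lambda)+\Gamma_2(x,\lambda))=\Psi(x,\lambda)\sin\xi(x,\lambda)$. This gives \eqref{pruferR} with no further work.

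\emph{Second identity.} Since $\xi=2\theta_1+\Gamma_2$ we have $\xi'=2\theta_1'+\Gamma_2'$. First I would substitute part~(c) of Lemma~\ref{lemmaprufer} with $j=1$ together with $\gamma_1'(x,\lambda)=k(\lambda)+\varphi_1'(x,\lambda)$ from \eqref{gammavarphi}, and collect the periodic pieces via $\delta=2\varphi_1+\Gamma_2$, i.e. $2\varphi_1'+\Gamma_2'=\delta'$. This yields
\[
\xi'(x,\lambda)=2k(\lambda)+\delta'(x,\lambda)-\frac{4V(x)}{\omega}\Bigl(\abs{g_1(x,\lambda)}^2\sin^2\theta_1(x,\lambda)-\abs{g_2(x,\lambda)}^2\sin^2\theta_2(x,\lambda)\Bigr).
\]
Next I would apply $\sin^2\theta=\tfrac12(1-\cos2\theta)$ to rewrite the last parenthesis as $\tfrac12\bigl(\abs{g_1(x,\lambda)}^2-\abs{g_2(x,\lambda)}^2\bigr)-\tfrac12\bigl(\abs{g_1(x,\lambda)}^2\cos2\theta_1(x,\lambda)-\abs{g_2(x,\lambda)}^2\cos2\theta_2(x,\lambda)\bigr)$, so that \eqref{pruferxi} reduces to proving the ``cosine analogue'' of \eqref{middleequationnew}, namely
\[
\abs{g_1(x,\lambda)}^2\cos2\theta_1(x,\lambda)-\abs{g_2(x,\lambda)}^2\cos2\theta_2(x,\lambda)=\Psi(x,\lambda)\cos\xi(x,\lambda).
\]
This last identity I would prove by repeating verbatim the computation in \eqref{middleequation}--\eqref{middleequationnew}: write $2\theta_2=2\theta_1+\Gamma_1$, expand $\cos(2\theta_1+\Gamma_1)$, and use the stated expressions for $\cos\Gamma_2$ and $\sin\Gamma_2$ in terms of $\Psi$ to collapse $\cos2\theta_1(\abs{g_1}^2-\abs{g_2}^2\cos\Gamma_1)+\sin2\theta_1\,\abs{g_2}^2\sin\Gamma_1$ into $\Psi\cos(2\theta_1+\Gamma_2)=\Psi\cos\xi$. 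Substituting back gives \eqref{pruferxi}.

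\emph{Periodicity of $\delta$.} Finally, $\delta\bmod2\pi$ is $1$-periodic because $\varphi_1\bmod2\pi$ is $1$-periodic by \eqref{gammavarphi} and $\Gamma_2\bmod2\pi$ is $1$-periodic by its construction from the periodic functions $\abs{g_1},\abs{g_2},\Gamma_1$; in particular $\delta'$ is a genuine $1$-periodic function of $x$, so the right-hand side of \eqref{pruferxi} is well defined. I do not expect a real obstacle here: the only subtlety is bookkeeping the ``$\bmod2\pi$'' ambiguity so that $\Gamma_2'$ (hence $\delta'$) makes sense, and noticing that the cosine identity needed is obtained by exactly the manipulation already used for its sine counterpart \eqref{middleequationnew}.
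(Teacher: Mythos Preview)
Your proposal is correct and follows exactly the route sketched in the paper: the paper's proof is a two-line pointer to Lemma~\ref{lemmaprufer}(b),(c), \eqref{gammavarphi}, and \eqref{middleequationnew}, and you have simply written out the computation those references imply, including the cosine analogue of \eqref{middleequationnew} that is implicitly needed for \eqref{pruferxi}. There is no substantive difference in approach.
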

\begin{proof}
	One directly obtains \eqref{pruferR} by (b) of Lemma \ref{lemmaprufer} and \eqref{middleequationnew}. By \eqref{gammavarphi}, \eqref{middleequationnew} and (c) of Lemma \ref{lemmaprufer}, one can obtain \eqref{pruferxi}.
\end{proof}

\section{Oscillatory Intergral estimate}

\begin{lemma}\cite[Lemma 3.1]{L19stark}\label{L}
	Let $\beta_1>0,\beta_2>0$ and $a\neq 0$ be constants. Suppose $\beta_1+\beta_2>1,\beta_2>\frac{1}{2}$.
	Suppose that $\theta(x)$ satisfies 
	\begin{align}
		\theta'(x)=a+\frac{O(1)}{1+\abs{x}^{\beta_1}}, \ \abs{x}\to\infty.
	\end{align}
Let $\beta=\min\{\beta_2,\beta_1+\beta_2-1,2\beta_2-1\}$. Then for any $x\geq x_0>1$, one has
\begin{align}
	\int_{x_0}^x\frac{\sin\theta(t)}{t^{\beta_2}}dt=\frac{O(1)}{x_0^{\beta}},
\end{align}
and
\begin{align}
		\int_{x_0}^x\frac{\cos\theta(t)}{t^{\beta_2}}dt=\frac{O(1)}{x_0^\beta}.
\end{align}
\end{lemma}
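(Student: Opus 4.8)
The plan is to prove the $\sin$-estimate by a single integration by parts that is careful not to differentiate the error term in the hypothesis on $\theta'$, on which we have no control, and then to read off the $\cos$-estimate verbatim with the roles of $\sin$ and $\cos$ interchanged. Since $\theta'(t)\to a\neq 0$, there is some $x_1>1$ with $\abs{\theta'(t)-a}\le\abs{a}/2$, hence $\abs{\theta'(t)}\ge\abs{a}/2>0$, for all $t\ge x_1$; write $\theta'(t)=a+r(t)$ where $\abs{r(t)}\le C t^{-\beta_1}$ on $[x_1,\infty)$ for a suitable constant $C$.

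First I would reduce to the case $x_0\ge x_1$. If $1<x_0<x_1$, split $\int_{x_0}^{x}=\int_{x_0}^{\min(x,x_1)}+\int_{\min(x,x_1)}^{x}$. The first piece is bounded in absolute value by $\int_{1}^{x_1}t^{-\beta_2}\,dt$, a fixed constant, which is $O(1)/x_0^{\beta}$ because $x_0^{\beta}\le x_1^{\beta}$ (recall $\beta>0$ as $\beta_2>\tfrac12$ and $\beta_1+\beta_2>1$); the second piece is exactly the main estimate with base point $x_1$, and $x_1^{-\beta}\le x_0^{-\beta}$. So from now on assume $x_0\ge x_1$.

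For the main estimate I would use the algebraic splitting $\frac{1}{a+r}=\frac{1}{a}-\frac{r}{a(a+r)}$ together with $(\cos\theta(t))'=-(a+r(t))\sin\theta(t)$ to obtain the pointwise identity
\begin{align}
\sin\theta(t)=-\frac{1}{a}\bigl(\cos\theta(t)\bigr)'-\frac{r(t)}{a}\sin\theta(t).\nonumber
\end{align}
Dividing by $t^{\beta_2}$, integrating over $[x_0,x]$, and integrating the first term by parts gives
\begin{align}
\int_{x_0}^{x}\frac{\sin\theta(t)}{t^{\beta_2}}\,dt=-\frac{1}{a}\left[\frac{\cos\theta(t)}{t^{\beta_2}}\right]_{x_0}^{x}-\frac{\beta_2}{a}\int_{x_0}^{x}\frac{\cos\theta(t)}{t^{\beta_2+1}}\,dt-\frac{1}{a}\int_{x_0}^{x}\frac{r(t)\sin\theta(t)}{t^{\beta_2}}\,dt.\nonumber
\end{align}
The boundary term is $O(x_0^{-\beta_2})$ since $x^{-\beta_2}\le x_0^{-\beta_2}$; the second integral is bounded by $\frac{\beta_2}{\abs{a}}\int_{x_0}^{\infty}t^{-\beta_2-1}\,dt=O(x_0^{-\beta_2})$; and, using $\abs{r(t)}\le Ct^{-\beta_1}$ and $\beta_1+\beta_2>1$, the last integral is bounded by $\frac{C}{\abs{a}}\int_{x_0}^{\infty}t^{-\beta_1-\beta_2}\,dt=O(x_0^{-(\beta_1+\beta_2-1)})$. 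Hence the left-hand side is $O\bigl(x_0^{-\min\{\beta_2,\,\beta_1+\beta_2-1\}}\bigr)$, which is $O(x_0^{-\beta})$ because $\beta\le\min\{\beta_2,\beta_1+\beta_2-1\}$. The $\cos$-estimate follows in the same way from $(\sin\theta(t))'=(a+r(t))\cos\theta(t)$, which yields $\cos\theta(t)=\frac{1}{a}(\sin\theta(t))'-\frac{r(t)}{a}\cos\theta(t)$ and then the identical three-term bound.

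The only point requiring care — the obstacle, though a mild one — is that the hypothesis provides no bound on $r'(t)=\theta''(t)$, so one cannot integrate by parts against $\frac{1}{t^{\beta_2}\theta'(t)}$ directly. The splitting $\frac{1}{a+r}=\frac{1}{a}-\frac{r}{a(a+r)}$ is precisely what substitutes for the missing second-derivative control: it peels off the clean piece $-\frac{1}{a}(\cos\theta)'$ (to which integration by parts applies with impunity) and recasts the remaining piece as an absolutely convergent integral thanks to the decay $\abs{r(t)}\lesssim t^{-\beta_1}$ together with $\beta_1+\beta_2>1$. Everything else is routine. (This is \cite[Lemma 3.1]{L19stark}, so one may alternatively simply cite it.)
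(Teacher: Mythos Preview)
The paper does not prove this lemma at all; it simply imports it verbatim from \cite[Lemma~3.1]{L19stark}. Your argument is correct and self-contained: the identity $\sin\theta=-\tfrac{1}{a}(\cos\theta)'-\tfrac{r}{a}\sin\theta$ followed by one integration by parts cleanly yields the bound $O\bigl(x_0^{-\min\{\beta_2,\,\beta_1+\beta_2-1\}}\bigr)$, which is indeed at least as strong as the stated $O(x_0^{-\beta})$ since $\beta\le\min\{\beta_2,\beta_1+\beta_2-1\}$. In fact you have shown that the term $2\beta_2-1$ in the definition of $\beta$ is superfluous. One cosmetic remark: the lower bound $\abs{\theta'(t)}\ge\abs{a}/2$ that you set up in the first paragraph is never actually invoked, because your displayed identity for $\sin\theta(t)$ holds pointwise without any nonvanishing assumption on $a+r(t)$; only the decay $\abs{r(t)}\le Ct^{-\beta_1}$ is used.
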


The following Lemma has been essentially proved in \cite[Prop. 5.1]{LO17}, for completeness, we provide with the proof by using Lemma \ref{L}.
\begin{lemma}\label{L2}
	Assume that $\gamma(x)\mod 2\pi,\Gamma(x)$ are continuous $1$-periodic functions on $\mathbb{R}$. Let $a\in \mathbb{R}\setminus 2\pi\mathbb{Z}$. Suppose that $\theta(x)$ satisfies
	\begin{align}
		\theta'(x)=a+\gamma'(x)+\frac{O(1)}{\abs{x}},\ \abs{x}\to \infty,
	\end{align} 
then one has for any $x>x_0>1$, 
\begin{align}\label{Gammasintheta}
	\int_{x_0}^x\frac{\Gamma(t)\sin\theta(t)}{t}dt=\frac{O(1)}{x_0},
\end{align}
and for any $x<x_0<-1$,
\begin{align}\label{Gammasinthetamines}
	\int_{x}^{x_0}\frac{\Gamma(t)\sin\theta(t)}{{t}}dt=\frac{O(1)}{{x_0}}.
\end{align}
\end{lemma}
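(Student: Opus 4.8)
The plan is to reduce \eqref{Gammasintheta} to the basic oscillatory estimate in Lemma \ref{L} by expanding the periodic weight $\Gamma(t)$ in a Fourier series. Write
\begin{align}
\Gamma(t)=\sum_{n\in\mathbb{Z}}c_n e^{2\pi i nt},\nonumber
\end{align}
with $\sum_n\abs{c_n}<\infty$ (this needs a little care—see below). Then
\begin{align}
\frac{\Gamma(t)\sin\theta(t)}{t}=\sum_{n\in\mathbb{Z}}c_n\,\frac{e^{2\pi i nt}\sin\theta(t)}{t},\nonumber
\end{align}
and I would integrate term by term, writing $e^{2\pi i nt}\sin\theta(t)$ as a combination of $\sin(\theta(t)+2\pi nt)$, $\cos(\theta(t)+2\pi nt)$, $\sin(\theta(t)-2\pi nt)$ and $\cos(\theta(t)-2\pi nt)$. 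For each $n$ the phase $\psi_n(t):=\theta(t)\pm 2\pi nt$ satisfies $\psi_n'(t)=(a\pm 2\pi n)+\gamma'(t)+O(1)/\abs{t}$. The point is that $a\notin 2\pi\mathbb{Z}$ forces $a\pm 2\pi n\neq 0$ for every $n$, so each phase has a genuinely nonzero linear part, and Lemma \ref{L} (applied with $\beta_1=1$, $\beta_2=1$, hence $\beta=1$, after absorbing $\gamma'(t)$—which is bounded and periodic, hence $O(1)$—into the error) gives
\begin{align}
\int_{x_0}^x\frac{e^{2\pi i nt}\sin\theta(t)}{t}\,dt=\frac{O(1)}{x_0}.\nonumber
\end{align}

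The main obstacle is making the implied constant in this last bound summable in $n$, since naively Lemma \ref{L} produces a constant that blows up as $a\pm 2\pi n\to\infty$ is not the issue—rather one must check it does not blow up, and in fact for large $\abs{n}$ the larger linear coefficient $\abs{a\pm 2\pi n}$ only helps (integration by parts against a fast phase gains a factor $1/\abs{a\pm 2\pi n}$). So I would either (i) track the $a$-dependence in the proof of Lemma \ref{L} and record that the constant is $O(1/\abs{a})$ for $\abs{a}$ large, so that $\sum_n \abs{c_n}\cdot O(1/\abs{a\pm 2\pi n})<\infty$ whenever $\sum_n\abs{c_n}<\infty$; or (ii) split the sum into $\abs{n}\le N$ and $\abs{n}>N$, handle the finitely many low modes directly by Lemma \ref{L}, and for the tail integrate by parts once in $t$ to extract the decay in $\abs{n}$, using that $\gamma'$ is bounded. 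Option (i) is cleanest and is essentially what \cite[Prop. 5.1]{LO17} does. I also need $\Gamma$ to have an absolutely summable Fourier series: here $\Gamma$ arises as $\Gamma_2(x,\lambda)$ (or $\gamma$-type data) built from $\abs{g_1}^2,\abs{g_2}^2,\cos\Gamma_1,\sin\Gamma_1$ and the real-analytic/smooth dependence of Floquet solutions on $x$, so $\Gamma$ is actually smooth (or at least $C^1$) and periodic, which gives rapid (or at least $O(1/n^2)$) decay of $c_n$; I would state the lemma under the hypothesis that $\Gamma$ is $C^1$, or simply invoke smoothness of the periodic coefficients, whichever matches the standing assumptions.

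For the negative-axis statement \eqref{Gammasinthetamines} the argument is identical after the substitution $t\mapsto -t$, which turns the interval $[x,x_0]$ with $x<x_0<-1$ into a right-half-line interval and replaces $\theta(t),\gamma(t),\Gamma(t)$ by $\theta(-t),-\gamma(-t),\Gamma(-t)$; the linear part of the phase becomes $-a$, still outside $2\pi\mathbb{Z}$, so Lemma \ref{L} applies verbatim. Thus it suffices to carry out the computation on $[x_0,x]$ and then quote symmetry. To summarize the order of steps: first record the Fourier expansion of $\Gamma$ with summable coefficients; second, decompose $e^{2\pi int}\sin\theta$ into four real oscillatory pieces with phases $\theta\pm 2\pi nt$; third, apply Lemma \ref{L} to each piece, being careful to bound the constant by something summable in $n$ (tracking $\abs{a\pm2\pi n}^{-1}$ decay); fourth, sum and conclude; fifth, reduce the $x<x_0<-1$ case to the previous one by $t\mapsto -t$.
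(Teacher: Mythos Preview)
Your overall architecture (Fourier expand the periodic amplitude, reduce each mode to Lemma~\ref{L}, sum, then handle $x<x_0<-1$ by $t\mapsto -t$) is the paper's architecture. But there is a real gap in the step where you write $\psi_n'(t)=(a\pm 2\pi n)+\gamma'(t)+O(1)/\abs{t}$ and then say you will ``absorb $\gamma'(t)$\ldots into the error'' and apply Lemma~\ref{L} with $\beta_1=1$. That does not work: $\gamma'(t)$ is periodic and bounded but does \emph{not} decay, so after absorption the remainder is merely $O(1)$, not $O(1)/(1+\abs{t}^{\beta_1})$ for any $\beta_1>0$, and the hypothesis of Lemma~\ref{L} fails. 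This is exactly the obstruction the periodic drift $\gamma'$ creates, and it has to be removed from the phase, not hidden in the error.

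The paper removes it by subtracting $\gamma$ from $\theta$ first: set $\tilde\theta(t)=\theta(t)-\gamma(t)$, so that $\tilde\theta'(t)=a+O(1)/\abs{t}$ genuinely satisfies the Lemma~\ref{L} hypothesis, and write $\sin\theta=\cos\gamma\,\sin\tilde\theta+\sin\gamma\,\cos\tilde\theta$. The periodic factors $\cos\gamma,\sin\gamma$ are then merged with $\Gamma$ into a single $1$-periodic amplitude (e.g.\ $\Gamma\cos\gamma$), and it is \emph{that} product which gets Fourier expanded. Each mode then has phase $\tilde\theta(t)\pm 2\pi nt$ with derivative $a\pm 2\pi n+O(1)/\abs{t}$, and Lemma~\ref{L} applies cleanly. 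A secondary difference: the paper avoids your extra regularity assumption on $\Gamma$. Instead of requiring $\sum\abs{c_n}<\infty$, it rescales $s=nt$ to pull out a factor $1/n$ from each mode integral and then sums via Cauchy--Schwarz against $\sum(a_n^2+b_n^2)<\infty$, which holds for merely continuous $\Gamma$. Your approach of tracking $1/\abs{a\pm 2\pi n}$ decay would also work once the $\gamma'$ issue is fixed, but the paper's rescaling is cleaner and does not need $\Gamma\in C^1$.
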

\begin{proof}We only give the proof of \eqref{Gammasintheta}, then one can obtain \eqref{Gammasinthetamines} by letting $t=-s$. Denote by 
\begin{align}
	\tilde{\theta}(x)=\theta(x)-\gamma(x).\nonumber
\end{align}
Then one has
\begin{align}\label{thetatilde}
	\tilde{\theta}'(x)=a+\frac{O(1)}{\abs{x}}, \ \abs{x}\to\infty,
\end{align}
and
\begin{align}
	\sin\theta(x)=\cos\gamma(x)\sin\tilde{\theta}(x)+\sin\gamma(x)\cos\tilde{\theta}(x).\nonumber
\end{align}
Hence, we only need to prove
\begin{align}\label{Gammasinthetaminus}
	\int_{x_0}^{x}\frac{\Gamma(t)\cos\gamma(t)\sin\tilde{\theta}(t)}{{t}}dt=\frac{O(1)}{x_0},
\end{align}
and
\begin{align}\label{Gammacosthetaminus}
	\int_{x_0}^{x}\frac{\Gamma(t)\sin\gamma(t)\cos\tilde{\theta}(t)}{{t}}dt=\frac{O(1)}{x_0}.
\end{align}
To avoid repetition, we only prove \eqref{Gammasinthetaminus}. We mention that $\Gamma(x)\cos\gamma(x)$ is still $1$-periodic. 
Consider the Fourier expansion of $\Gamma(x)\cos\gamma(x)$,
	\begin{align}
		\Gamma(x)\cos\gamma(x)=\frac{a_0}{2}+\sum_{n=1}^\infty a_n\cos(2\pi nx)+b_n\sin(2\pi nx).\nonumber
	\end{align}
To prove \eqref{Gammasinthetaminus}, by the fact that $\sum a_n^2+b_n^2<\infty,$ we only need to show 
\begin{align}
	\int_{x_0}^x\frac{\sin\tilde{\theta}(t)}{t}dt=\frac{O(1)}{x_0},\nonumber
\end{align}
and for any $n>0$,
\begin{align}
	\int_{x_0}^x\frac{\cos(2\pi nt)\sin\tilde{\theta}(t)}{t}dt=\frac{1}{n}\frac{O(1)}{x_0},\nonumber
\end{align} 
and 
\begin{align}\label{sinsin}
		\int_{x_0}^x\frac{\sin(2\pi nt)\sin\tilde{\theta}(t)}{t}dt=\frac{1}{n}\frac{O(1)}{x_0}.
\end{align}
By the same reason, we only prove \eqref{sinsin}. Since
\begin{align}
\sin(2\pi nt)\sin\tilde{\theta}(t)=\frac{1}{2}\left(\cos(2\pi nt-\tilde{\theta}(t))-\cos(2\pi nt+\tilde{\theta}(t))\right),\nonumber
\end{align}
we have
\begin{align}
	\int_{x_0}^x\frac{\sin(2\pi nt)\sin\tilde{\theta}(t)}{t}dt=&\int_{x_0}^x\frac{\cos(2\pi nt-\tilde{\theta}(t))}{2t}dt\nonumber\\
	&-\int_{x_0}^x\frac{\cos(2\pi nt+\tilde{\theta}(t))}{2t}dt.\nonumber
\end{align}
Change variables with $s=nt$, by \eqref{thetatilde} and the fact that $a\in \mathbb{R}\setminus 2\pi\mathbb{Z}$, applying Lemma \ref{L} with $\beta_1=\beta_2=1$, one obtains the result.
\end{proof}
\section{Construction}
We assume that $\lambda$ and $\lambda_j$ are different values in $\cup_{j=-\infty}^\infty(a_j,b_j)$ with $k(\lambda),k(\lambda_j)\neq \frac{\pi}{2}$, $k(\lambda)\neq k(\lambda_j)$ and $k(\lambda)+k(\lambda_j)\neq \pi$. Denote by
\begin{align}
	k&=k(\lambda),\ k_j=k(\lambda_j), \ \delta(x)=\delta(x,\lambda), \delta_j(x)=\delta(x,\lambda_j),\nonumber\\
	R(x)&=R(x,\lambda)\ , R_j(x)=R(x,\lambda_j),\ \xi(x)=\xi(x,\lambda),\ \xi_j(x)=\xi(x,\lambda_j)\nonumber\\
	\Psi(x)&=\Psi(x,\lambda),\ \Psi_j(x)=\Psi(x,\lambda_j).\nonumber
\end{align}

We mention that $\delta(x),\Psi(x),\delta_j(x),\Psi_j(x)$ are independent of $V$. 
Next, we will give the construction of potentials $V$ on $[a,\infty)$ and on $(-\infty,a]$ for some large $a>0$,  the potential on $(-a,a)$ will not influence the speed of the decreases of $R$ and $R_j$. Consider the nonlinear differential equation for $x>b>0$,
\begin{align}\label{differentialeuqation326}
	\xi'(x,\lambda,a,b,\xi_0)=&2k+\delta'(x)\\
	&+\frac{2C\sin\xi(x)}{x-b}\left(\abs{g_1(x,\lambda)}^2-\abs{g_2(x,\lambda)}^2-\Psi(x)\cos\xi(x)\right),\nonumber
\end{align}
and for $x<-b$,
\begin{align}\label{differentialeuqation3262}
	\xi'(x,\lambda,a,b,\xi_1)=&2k+\delta'(x)\\
	&+\frac{2C\sin\xi(x)}{x+b}\left(\abs{g_1(x,\lambda)}^2-\abs{g_2(x,\lambda)}^2-\Psi(x)\cos\xi(x)\right),\nonumber
\end{align}
where $C>0$ is a large constant will be defined later. Solve \eqref{differentialeuqation326} on $[a,\infty)$ with the initial condition $\xi(a)=\xi_0$, where $a>b$, we can obtain a unique solution $\xi(x)=\xi (x,\lambda,a,b,\xi_0)$ on $[a,\infty)$. Solve \eqref{differentialeuqation3262} on $(-\infty,a]$ with the initial condition $\xi(-a)=\xi_1$, we can obtain a unique solution $\xi(x)=\xi (x,\lambda,a,b,\xi_1)$ on $(-\infty,a]$.
Let
\begin{align}\label{potentialconstruction326}
	V(x,\lambda,a,b,\xi_0)=-\frac{\omega C}{x-b}\sin\xi (x,\lambda,a,b,\xi_0), \ x\in [a,\infty),
\end{align}
and
\begin{align}\label{potentialconstruction3262}
	V(x,\lambda,a,b,\xi_1)=-\frac{\omega C}{x+b}\sin\xi (x,\lambda,a,b,\xi_1), \ x\in (-\infty,a].
\end{align}
Denote by 
\begin{align}
	V(x,\lambda,a,b,\xi_0,\xi_1)=\begin{cases}
		V(x,\lambda,a,b,\xi_0), \ x\in [a,\infty),\\	V(x,\lambda,a,b,\xi_1), \ x\in (-\infty,-a].
	\end{cases}\label{potential}
\end{align}
\begin{lemma}\label{lemma-100}
	Suppose $k\neq \frac{\pi}{2}$. Let $V$ be difined by \eqref{potential}, then for any $\pm x> a$, we have
\begin{align}\label{RpmxC}
	\ln R(x)-\ln R(\pm a)\leq -100\ln\frac{\pm x-b}{a-b}+C,
\end{align}
and 
\begin{align}
	\ln R(x)\leq \ln R(\pm a).
\end{align}
\end{lemma}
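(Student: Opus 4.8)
The plan is to prove the key estimate \eqref{RpmxC} directly from the Pr\"ufer formula \eqref{pruferR} combined with the oscillatory integral bound in Lemma \ref{L2}, exploiting the fact that the constructed potential \eqref{potentialconstruction326} makes $V(x)\Psi(x)\sin\xi(x)$ proportional to $\sin^2\xi(x)$. I will treat the $+\infty$ side; the $-\infty$ side is identical after $x\mapsto -x$.

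First I would substitute the constructed potential \eqref{potentialconstruction326} into the logarithmic derivative formula \eqref{pruferR}. Since $V(x,\lambda,a,b,\xi_0)=-\frac{\omega C}{x-b}\sin\xi(x)$ on $[a,\infty)$, we get
\begin{align}
\big(\ln R(x)\big)' = \frac{V(x)}{\omega}\Psi(x)\sin\xi(x) = -\frac{C}{x-b}\,\Psi(x)\sin^2\xi(x) \le 0,\nonumber
\end{align}
because $\Psi(x)>0$ and $C>0$. Integrating from $a$ to $x$ immediately gives $\ln R(x)\le \ln R(a)$, which is the second displayed inequality. For the first inequality I would write $\sin^2\xi = \tfrac12(1-\cos 2\xi)$, so that
\begin{align}
\ln R(x)-\ln R(a) = -\frac{C}{2}\int_a^x \frac{\Psi(t)}{t-b}\,dt + \frac{C}{2}\int_a^x \frac{\Psi(t)\cos 2\xi(t)}{t-b}\,dt.\nonumber
\end{align}
The first integral is the ``main term'': since $\Psi$ is $1$-periodic, continuous, and strictly positive, its average $\overline{\Psi}>0$, and a standard estimate (or a direct comparison, using $\frac{1}{t-b}\sim\frac{1}{t}$ for large $t$) shows $\int_a^x \frac{\Psi(t)}{t-b}\,dt = \overline{\Psi}\,\ln\frac{x-b}{a-b} + O(1)$ uniformly in $x$. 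The second integral is the oscillatory term; here $2\xi(t)$ satisfies, by \eqref{differentialeuqation326}, an equation of the form $(2\xi)'(t) = 4k + 2\delta'(t) + \frac{O(1)}{t}$ with $\delta$ being $1$-periodic and $4k\notin 2\pi\mathbb{Z}$ (this is where $k\neq\frac{\pi}{2}$ and $k\neq 0$ enter). Hence Lemma \ref{L2}, applied with $\theta=2\xi$, $\gamma=2\delta$, $a=4k$, and $\Gamma=\Psi$ (a continuous $1$-periodic function), yields $\int_a^x \frac{\Psi(t)\cos 2\xi(t)}{t-b}\,dt = O(1)$ uniformly in $x\ge a$ — after absorbing the harmless shift $t-b$ vs. $t$. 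Combining the two pieces gives
\begin{align}
\ln R(x)-\ln R(a) \le -\frac{C\,\overline{\Psi}}{2}\ln\frac{x-b}{a-b} + C',\nonumber
\end{align}
where $C'$ depends only on $\lambda$ (through $\Psi$, $\delta$, $k$) and on the $O(1)$ constants, not on $C$. Finally I would fix $C$ large enough that $\frac{C\,\overline{\Psi}}{2}\ge 100$; enlarging the additive constant so the right side is $\le -100\ln\frac{x-b}{a-b}+C$ then gives \eqref{RpmxC}. The negative-$x$ case uses \eqref{differentialeuqation3262}, \eqref{potentialconstruction3262}, and \eqref{Gammasinthetamines} in exactly the same way.

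The main obstacle I anticipate is verifying that $2\xi(t)$ genuinely has a derivative of the form $4k+2\delta'(t)+\frac{O(1)}{t}$ with the $O(1)$ uniform in the parameters $a,b,\xi_0$, so that Lemma \ref{L2} applies with constants independent of the construction parameters. This requires checking that the nonlinear feedback term $\frac{2C\sin\xi}{x-b}(\abs{g_1}^2-\abs{g_2}^2-\Psi\cos\xi)$ in \eqref{differentialeuqation326} is bounded by $\frac{O(1)}{x}$ with an implied constant depending only on $C$ and $\lambda$ (which it is, since $\abs{g_j}$, $\Psi$ are periodic hence bounded, and $\abs{\sin\xi},\abs{\cos\xi}\le 1$, and $\frac{1}{x-b}\le\frac{2}{x}$ for $x\ge 2b$), and that the periodic part of $\xi'$ is exactly $\delta'$. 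Once this uniformity is in place, the only remaining care is bookkeeping the difference between $\frac{1}{t-b}$ and $\frac1t$ (a convergent correction, hence $O(1)$) and tracking that the additive constant in \eqref{RpmxC} can be taken uniform; these are routine.
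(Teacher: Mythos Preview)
Your proof is correct and follows essentially the same route as the paper's: substitute the constructed potential into \eqref{pruferR} to get $(\ln R)'=-\tfrac{C}{x-b}\Psi\sin^2\xi\le 0$ (giving the second inequality immediately), write $\sin^2\xi=\tfrac12(1-\cos 2\xi)$ to split into a main logarithmic term and an oscillatory remainder, and bound the latter via Lemma~\ref{L2} using $(2\xi)'=4k+2\delta'+O(1/t)$ together with $4k\notin 2\pi\mathbb{Z}$. The only minor slip is the claim that the additive constant $C'$ is independent of the coefficient $C$---it is not, since both the $O(1)/t$ term in $(2\xi)'$ and the prefactor $\tfrac{C}{2}$ on the oscillatory integral carry $C$---but this is harmless once $C$ is fixed large enough that $\tfrac{C\,\overline{\Psi}}{2}\ge 100$.
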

\begin{proof}Without loss of generality, assume $b=0$, and we only consider the case $x>a$, since we can directly obtain the result by letting $t=-x$.
	By \eqref{pruferR} and \eqref{potentialconstruction326}, one has
	\begin{align}
		\ln R(x)-\ln R(a)=&\int_a^x-\frac{C\Psi(t)}{t}\sin^2\xi(t)dt\nonumber\\
		=&\int_a^x-\frac{C\Psi(t)}{2t}\left(1-\cos2\xi(t)\right)\nonumber\\
		=&\int_a^x-\frac{C\Psi(t)}{2t}dt+\int_a^x\frac{C}{2}\frac{\Psi(t)}{t}\sin\left(2\xi(t)+\frac{\pi}{2}\right)dt.\label{lnr-lnra}
	\end{align}
By \eqref{differentialeuqation326}, we have
\begin{align}
	\left(2\xi(t)+\frac{\pi}{2}\right)'=4k+2\delta'(t)+\frac{O(1)}{t},\nonumber
\end{align}
as $t\to\infty.$
Then applying \eqref{Gammasintheta} one obtains
\begin{align}
	\int_a^x\frac{C}{2}\frac{\Psi(t)}{t}\sin\left(2\xi(t)+\frac{\pi}{2}\right)dt=O(1),\nonumber
\end{align}
 by \eqref{lnr-lnra} one has 
\begin{align}
	\ln R(x)-\ln R(a)=O(1)-\int_a^x\frac{C\Psi(t)}{2t}dt.\nonumber
\end{align}
Therefore, the result follows from that $\Psi(x)>0$ and the periodicity of $\Psi(x)$.
\end{proof}

\begin{lemma}\label{lemma1.5}
	Suppose that $k\neq k_j$, $k+k_j\neq \pi$. Let $V$ be defined by \eqref{potential}, then for any $\pm x>x_0\geq a$ and large enough $x_0-b$ one has
	\begin{align}\label{Rj1.5R0}
		R_j(x)\leq 1.5 R_j(\pm x_0).
	\end{align}
\end{lemma}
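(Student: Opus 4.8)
The plan is to estimate the logarithmic derivative of $R_j$ using \eqref{pruferR} applied at the energy $\lambda_j$, and then control the resulting oscillatory integral by means of Lemma \ref{L2}. Concretely, for $x>x_0\geq a$ (the case $x<-x_0\leq -a$ being symmetric via $t\mapsto -t$), by \eqref{pruferR} one has
\begin{align}
\ln R_j(x)-\ln R_j(x_0)=\int_{x_0}^x\frac{V(t)}{\omega}\Psi_j(t)\sin\xi_j(t)\,dt,\nonumber
\end{align}
and since $V$ on $[a,\infty)$ is given by \eqref{potentialconstruction326}, namely $V(t)=-\frac{\omega C}{t-b}\sin\xi(t)$, this becomes
\begin{align}
\ln R_j(x)-\ln R_j(x_0)=-\int_{x_0}^x\frac{C\,\Psi_j(t)}{t-b}\sin\xi(t)\sin\xi_j(t)\,dt.\nonumber
\end{align}
Now I would expand the product $\sin\xi(t)\sin\xi_j(t)=\tfrac12\cos(\xi(t)-\xi_j(t))-\tfrac12\cos(\xi(t)+\xi_j(t))$, so the integral splits into two oscillatory integrals with phases $\xi(t)\pm\xi_j(t)$ against the $1$-periodic weight $\Psi_j(t)$ and decay $1/(t-b)$.

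The key point is that both combined phases have nonzero linear growth rate. From \eqref{differentialeuqation326} we have $\xi'(t)=2k+\delta'(t)+O(1)/(t-b)$ and from \eqref{pruferxi} (applied at $\lambda_j$, with the constructed $V$ which is $O(1)/(t-b)$) we have $\xi_j'(t)=2k_j+\delta_j'(t)+O(1)/(t-b)$. Hence
\begin{align}
(\xi(t)-\xi_j(t))'=2(k-k_j)+(\delta(t)-\delta_j(t))'+\frac{O(1)}{t-b},\nonumber
\end{align}
and
\begin{align}
(\xi(t)+\xi_j(t))'=2(k+k_j)+(\delta(t)+\delta_j(t))'+\frac{O(1)}{t-b}.\nonumber
\end{align}
Since $k\neq k_j$ we have $2(k-k_j)\notin 2\pi\Z$ (here we use $0<k,k_j<\pi$ so $|k-k_j|<\pi$), and since $k+k_j\neq\pi$ we have $2(k+k_j)\notin 2\pi\Z$ (using $0<k+k_j<2\pi$). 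Therefore each phase satisfies the hypotheses of Lemma \ref{L2} with $\gamma=\delta\mp\delta_j$ (a continuous $1$-periodic function mod $2\pi$), $\Gamma=\Psi_j$ ($1$-periodic), and $a=2(k\mp k_j)\in\mathbb{R}\setminus 2\pi\Z$. Applying Lemma \ref{L2} to each term gives
\begin{align}
\ln R_j(x)-\ln R_j(x_0)=\frac{O(1)}{x_0-b},\nonumber
\end{align}
with a constant depending only on $C$, the periodic data, and the separations of quasimomenta — crucially not on $x$.

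Finally, for $x_0-b$ large enough the right-hand side is bounded in absolute value by $\ln 1.5$, which yields $\ln R_j(x)\leq \ln R_j(x_0)+\ln 1.5$, i.e. \eqref{Rj1.5R0}. For the negative half-line one repeats the argument with \eqref{potentialconstruction3262} and \eqref{differentialeuqation3262}, substituting $t=-s$ and invoking \eqref{Gammasinthetamines}. The main obstacle is verifying that the $O(1)/(t-b)$ error in $\xi_j'$ is genuinely controlled: this requires knowing that the constructed potential $V$ evaluated along the solution is indeed $O(1)/(t-b)$ uniformly, which follows from \eqref{potentialconstruction326}–\eqref{potentialconstruction3262} since $|\sin\xi|\leq 1$, together with the boundedness of $\abs{g_1(\cdot,\lambda_j)}^2-\abs{g_2(\cdot,\lambda_j)}^2-\Psi_j(\cdot)\cos\xi_j$ by periodicity; once this is in hand the phase-derivative estimates feeding Lemma \ref{L2} are immediate, and the rest is bookkeeping on the constants to pass from $O(1)/(x_0-b)$ to the factor $1.5$.
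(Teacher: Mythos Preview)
Your proof is correct and follows essentially the same route as the paper: express $\ln R_j(x)-\ln R_j(x_0)$ via \eqref{pruferR} and the explicit form \eqref{potentialconstruction326} of $V$, split $\sin\xi\sin\xi_j$ into $\tfrac12\cos(\xi-\xi_j)-\tfrac12\cos(\xi+\xi_j)$, check that $(\xi\pm\xi_j)'=2(k\pm k_j)+(\delta\pm\delta_j)'+O(1)/(t-b)$ with $2(k\pm k_j)\notin 2\pi\mathbb{Z}$, and invoke Lemma~\ref{L2}. You are in fact slightly more explicit than the paper in verifying the non-resonance conditions $2(k\pm k_j)\notin 2\pi\mathbb{Z}$ and in spelling out the passage from the $O(1)/(x_0-b)$ bound to the factor $1.5$.
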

\begin{proof}
	We only show the case of $x>x_0\geq a$, then we can obtain the result by letting $t=-x$. By \eqref{pruferxi} and \eqref{potentialconstruction326}, one has
	\begin{align}
		&\ln R_j(x)-\ln R_j(x_0)\nonumber\\
		=&\int_{x_0}^x-\frac{C\Psi_j(t)}{t-b}\sin\xi(t)\sin\xi_j(t)dt\nonumber\\
		=&\int_{x_0}^x-\frac{C\Psi_j(t)}{2(t-b)}\cos(\xi(t)-\xi_j(t))dt+\int_{x_0}^x\frac{C\Psi_j(t)}{2(t-b)}\cos(\xi(t)+\xi_j(t))dt.\nonumber
	\end{align}
By \eqref{pruferxi}, \eqref{differentialeuqation326} and \eqref{potentialconstruction326}, we have
\begin{align}
	\left(\xi(t)\pm \xi_j(t)+\frac{\pi}{2}\right)'=2k\pm2k_j+\delta'(t)\pm\delta'_j(t)+\frac{O(1)}{t-b},\nonumber
\end{align}
then by \eqref{Gammasintheta} one obtains the result.
\end{proof}
\begin{proposition}\label{prop5.3}
	Let $\lambda$ and $A=\{\lambda_j\}_{j=1}^N$ be in $\cup_{j=-\infty}^\infty(a_j,b_j)$. Suppose that $k\neq k_j$, $k\neq \frac{\pi}{2}$ and $k+k_j\neq \pi$ for any $j\in\{1,2,\cdots,N\}$. Suppose $\xi_0,\xi_1\in [0,\pi]$. Let $x_1>x_0>b$. Then there exist constants $K(\lambda,A), C(\lambda,A)$ (independent of $x,x_0$ and $b$) and the potential $\tilde{V}(x,\lambda,x_0,x_1,b,\xi_0,\xi_1)$ such that for $x_0-b>K(\lambda,A)$ the following holds:
	\item[\textbf{Potential}:]$\tilde{V}\in C_0^{\infty}((x_0,x_1)\cup (-x_1,-x_0))$, and
	\begin{align}\label{potentialC}
		\abs{\tilde{V}(x,\lambda,A,x_0,x_1,b,\xi_0,\xi_1)}&\leq \frac{C(\lambda,A)}{\pm x-b},\ x_0<\pm x<x_1.
	\end{align}
\item[\textbf{Solution for} $\lambda$:] 
\begin{align}\label{R-100C}
	R(\pm x_1)\leq C(\lambda,A)\left(\frac{x_1-b}{x_0-b}\right)^{-100}R(\pm x_0),
\end{align}
and for $x_0<\pm x<x_1$,
\begin{align}\label{Rx2x0}
	R(x)\leq 2R(\pm x_0).
\end{align}
\item[\textbf{Solution for} $\lambda_j$:] for any $x_0<\pm x\leq x_1$,
\begin{align}\label{Rjx2-x0}
	R_j(x)\leq 2R_j(\pm x_0).
\end{align}
\end{proposition}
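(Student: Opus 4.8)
The plan is to build $\tilde V$ by gluing two pieces together and cutting off smoothly: on an outer tail we use the solution $\xi(x,\lambda,a,b,\xi_0)$ of the nonlinear equation \eqref{differentialeuqation326} (with $a$ taken to be $x_0$) and the associated potential \eqref{potentialconstruction326}, restricted to $(x_0,x_1)$; on the complementary set we put $\tilde V=0$; and near $x_0$ and $x_1$ we multiply by a fixed smooth cutoff so that the result lies in $C_0^\infty((x_0,x_1)\cup(-x_1,-x_0))$. The bound \eqref{potentialC} is then immediate from the explicit formula $|V(x,\lambda,x_0,x_1,b,\xi_0)|\le \omega C/(x-b)$ in \eqref{potentialconstruction326}, with $C(\lambda,A)=\omega C$ (times the sup of the cutoff), once $C$ has been fixed; the symmetric construction on $(-x_1,-x_0)$ uses \eqref{differentialeuqation3262} and \eqref{potentialconstruction3262}. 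I will only write the case $x>0$, the case $x<0$ following by the substitution $t\mapsto -t$ exactly as in Lemmas \ref{lemma-100} and \ref{lemma1.5}.

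Next I would establish the decay estimate \eqref{R-100C} for the distinguished energy $\lambda$. This is Lemma \ref{lemma-100} applied on the interval $(x_0,x_1)$ rather than $(a,\infty)$: from \eqref{pruferR} and the defining relation \eqref{potentialconstruction326},
\begin{align}
\ln R(x_1)-\ln R(x_0)=-\int_{x_0}^{x_1}\frac{C\Psi(t)}{2(t-b)}\,dt+\int_{x_0}^{x_1}\frac{C\Psi(t)}{2(t-b)}\sin\!\Big(2\xi(t)+\tfrac{\pi}{2}\Big)\,dt.\nonumber
\end{align}
For the oscillatory term, \eqref{differentialeuqation326} gives $\big(2\xi(t)+\tfrac{\pi}{2}\big)'=4k+2\delta'(t)+O(1)/(t-b)$ with $4k\notin 2\pi\Z$ because $k\neq\tfrac{\pi}{2}$ and $k\in(0,\pi)$, so Lemma \ref{L2} bounds it by $C(\lambda)/(x_0-b)$, which is $O(1)$ for $x_0-b$ large; for the main term, positivity and periodicity of $\Psi$ give $\int_{x_0}^{x_1}\frac{C\Psi(t)}{2(t-b)}\,dt\ge c_\lambda C\ln\frac{x_1-b}{x_0-b}$ with $c_\lambda>0$. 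Choosing the constant $C=C(\lambda)$ large enough that $c_\lambda C\ge 100$ yields \eqref{R-100C}; since the oscillatory remainder is also $\le O(1)$ on every subinterval $(x_0,x)$, the uniform bound \eqref{Rx2x0} follows for $x_0-b$ large (absorbing the constant into the factor $2$, exactly as "$1.5$" was absorbed in Lemma \ref{lemma1.5}).

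For the remaining energies $\lambda_j$, the estimate \eqref{Rjx2-x0} is precisely Lemma \ref{lemma1.5}: using \eqref{pruferxi} for $R_j$ together with the product-to-sum identity,
\begin{align}
\ln R_j(x)-\ln R_j(x_0)=\int_{x_0}^{x}\!-\frac{C\Psi_j(t)}{2(t-b)}\cos(\xi(t)-\xi_j(t))\,dt+\int_{x_0}^{x}\!\frac{C\Psi_j(t)}{2(t-b)}\cos(\xi(t)+\xi_j(t))\,dt,\nonumber
\end{align}
and $\big(\xi(t)\pm\xi_j(t)+\tfrac{\pi}{2}\big)'=2k\pm 2k_j+\delta'(t)\pm\delta_j'(t)+O(1)/(t-b)$, where $2k\pm 2k_j\notin 2\pi\Z$ by the hypotheses $k\neq k_j$ and $k+k_j\neq\pi$ (again using $k,k_j\in(0,\pi)$). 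Lemma \ref{L2} makes both integrals $O(1)/(x_0-b)$, hence $\le\ln 2$ for $x_0-b$ large, giving \eqref{Rjx2-x0}. Finally I would set $K(\lambda,A)$ to be the maximum of the finitely many thresholds produced above (one from $\lambda$, one from each $\lambda_j$, $j=1,\dots,N$) and $C(\lambda,A)$ the maximum of the corresponding implied constants, which are all independent of $x$, $x_0$, $b$ because $\Psi,\Psi_j,\delta,\delta_j$ are. The main obstacle is bookkeeping: verifying that every frequency appearing in an oscillatory integral is genuinely non-resonant (lies outside $2\pi\Z$) so that Lemma \ref{L2} applies, and tracking that all constants are uniform in $b$ and in the endpoints; the analytic content is entirely contained in Lemmas \ref{L2}, \ref{lemma-100}, \ref{lemma1.5}.
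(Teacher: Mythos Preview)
Your proposal is correct and follows essentially the same route as the paper: define $V$ via \eqref{potentialconstruction326}--\eqref{potential} with $a=x_0$, invoke the content of Lemmas \ref{lemma-100} and \ref{lemma1.5} (which you re-derive inline rather than merely cite) to get \eqref{R-100C}--\eqref{Rjx2-x0} for the raw potential, and then smooth by a cutoff near the endpoints, observing that a small perturbation of $V$ only moves $R,R_j$ slightly via \eqref{pruferR}--\eqref{pruferxi}. The only cosmetic gap is that after introducing the cutoff you continue the computation with the exact formula \eqref{potentialconstruction326}; the paper handles this in the same informal way, simply noting that the smoothing changes $R$ and $R_j$ negligibly, which is why the constants $1$ (from Lemma \ref{lemma-100}) and $1.5$ (from Lemma \ref{lemma1.5}) are relaxed to $2$ in \eqref{Rx2x0} and \eqref{Rjx2-x0}.
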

\begin{proof}
	Let $V$ be defined by  \eqref{potential} with $a=x_0$. Applying Lemmas \ref{lemma-100} and \ref{lemma1.5} with $x=x_1$ and $a=x_0$, one can obtain \eqref{R-100C}, \eqref{Rx2x0} and \eqref{Rjx2-x0} with the coefficients $2$ being replaced by $1.5$. We need to modify the potential $V$ so that $V\in C_0^\infty$. By \eqref{pruferR} and \eqref{pruferxi} we know that a small change on $V$ will not cause much influence on $R$ and $R_j$ on $[x_0,x_1]\cup [-x_1,-x_0]$, thus Lemmas \ref{lemma-100} and \ref{lemma1.5} still hold, and we can obtain the result.
\end{proof}

\begin{proof}[\textbf{Proof of Theorems \ref{theorem2} and \ref{theorem3}}.]
	Once we obtain Proposition \ref{prop5.3}, we can construct potentials for Theorems \ref{theorem2} and \ref{theorem3} step by step, which has been well established in \cite{JL19,L19stark,LO17}. 
	
	We only give an outline of the construction here. For example, we want $\lambda$ and $\lambda_1$ to be embedded eigenvalues, then we need to construct potentials so that $R(x)$ and $R_1(x)$ decrease quickly enough to be $L^2$ near $\pm\infty$. First of all, let $V$ be in Propsition \ref{prop5.3} on some intervals $(-T_1,T_0]\cup [T_0,T_1)$, then by \eqref{R-100C} and \eqref{Rjx2-x0} we know that $R(x)$ decreases very fast and $R_1(x)$ will not undergo a significant increment. Next let $V$ be in Propsition \ref{prop5.3} on some intervals $(-T_2,T_1]\cup [T_1,T_2)$ with $\lambda$ and $\lambda_1$ exchanged, then by \eqref{R-100C} and \eqref{Rjx2-x0} we know that $R_1(x)$ will decreases very fast and $R(x)$ will not undergo a significant increment. By choosing intervals $(-T_{j+1},T_j]\cup [T_j,T_{j+1})$ properly, one can obtain that $R(x)$ and $R_1(x)$ are $L^2$ near $\pm\infty$.
	
	For many embedded eigenvalues. Let $\{N_r\}_{r\in\Z^+}$ be a non-decreasing sequence which goes to infinity slowly depending on $h(x)$. We further assume $N_{r+1}=N_r+1$ when $N_{r+1}>N_r$. At the $r$th step, we take $N_r$ eigenvalues $\{\lambda_1,\lambda_2,\cdots,\lambda_{N_r-1},\lambda_{N_r}\}$ into consideration. Applying Proposition \ref{prop5.3}, we construct potentials with $N_r$ pieces, where each piece comes from \eqref{potential} with $\lambda$ being an eigenvalue.
	The main difficulty is to control the size $T_r-T_{r-1}$ of each piece. The construction in \cite{JL19,LO17,L19stark} only uses inequalities \eqref{potentialC}, \eqref{R-100C}, \eqref{Rx2x0} and \eqref{Rjx2-x0} to obtain appropriate $T_r-T_{r-1}$ and $N_r$. Hence Proposition \ref{prop5.3} implies Theorems \ref{theorem2} and \ref{theorem3}. 
\end{proof}

	\section*{Acknowledgments}  The authors are supported by the National Natural
	Science Foundation of China (11871031).

	\bibliographystyle{abbrv} % abbrv
	\bibliography{reference}

\end{document}